\newif\ifarxive
\arxivetrue 

\ifarxive
\documentclass[a4paper,USenglish,colorlinks=true,urlcolor=blue,citecolor=blue,linkcolor=red,UKenglish,runningheads]{lipics}

\newcommand{\subcaptionbox}[2]{\subfloat[#1]{#2}}
\else
\documentclass[a4paper,USenglish,colorlinks=true,urlcolor=blue,citecolor=blue,linkcolor=red,UKenglish,runningheads]{lipics-v2016}
\fi

\usepackage{microtype}
\usepackage{xspace}
\usepackage{color}
\usepackage{paralist}
\usepackage{amsmath}
\usepackage{amssymb}
\usepackage{todonotes}
\usepackage{wrapfig}
\usepackage{thm-restate}
\usepackage[%
    font={small,sf},
    labelfont=bf,
    format=hang,    
    format=plain,
    margin=0pt
]{caption}

\ifarxive
\newcommand{\lipicsarxiv}[2]{{#2}\xspace}
\else
\newcommand{\lipicsarxiv}[2]{{#1}\xspace}
\fi

\renewcommand{\paragraph}[1]{\smallskip\noindent\textbf{#1}\xspace}

\definecolor{proofcolor}{rgb}{0.3137254902,0.3176470588,0.3294117647}

\captionsetup[subfigure]{labelformat=simple,position=below,justification=centering}
\theoremstyle{plain}

\graphicspath{{figures/}}
\newtheorem{lemmax}{\textbf{Lemma}}

\newtheorem{property}{\textbf{Property}}

\newcommand{\skel}{\ensuremath{skel}}
\newcommand{\length}{\ensuremath{\ell}}

\newcommand{\journal}[1]{\xspace}
\bibliographystyle{abbrv}

\def\etal{{\em et~al.}\xspace}

\author{Giordano {Da Lozzo}}
\author{William E. Devanny}
\author{David Eppstein}
\author{Timothy Johnson}

\affil{Computer Science Department, University of California, Irvine, USA\\
\texttt{\{gdalozzo,wdevanny,eppstein,tujohnso\}@uci.edu}}
\authorrunning{{Da Lozzo}~\etal} 

\Copyright{Giordano {Da Lozzo} and Will E. Devanny and David Eppstein and Timothy Johnson}

\subjclass{G.2.2 Graph Theory}
\keywords{Square-Contact Representations, 
Partial $2$-Trees, Simply-Nested Graphs}

\ifarxive
\else
\EventEditors{Yoshio Okamoto and Takeshi Tokuyama}
\EventNoEds{2}
\EventLongTitle{28th International Symposium on Algorithms and Computation (ISAAC 2017)}
\EventShortTitle{ISAAC 2017}
\EventAcronym{ISAAC}
\EventYear{2017}
\EventDate{December 9--12, 2017}
\EventLocation{Phuket, Thailand}
\EventLogo{}
\SeriesVolume{92}
\ArticleNo{24} 
\fi

\definecolor{blue}{rgb}{0.274,0.392,0.666}
\definecolor{red}{rgb}{1,0.3,0.3}
\definecolor{green}{rgb}{0,0.588,0.509}

\newcommand{\red}[1]{{{\textcolor{red}{#1}\xspace}}}

\newtheorem{observation}{Observation}
\newcommand{\remove}[1]{}

\title{Square-Contact Representations of Partial $2$-Trees and Triconnected Simply-Nested Graphs\footnote{
Supported in part by the National Science Foundation under Grants CCF-1228639, CCF-1618301, and CCF-1616248.
This article also reports on work supported by the U.S.~Defense Advanced Research Projects Agency (DARPA) under agreement no.~AFRL FA8750-15-2-0092. The views expressed are those of the authors and do not reflect the official policy or position of the Department of Defense or the U.S.~Government.
}
} 
\titlerunning{Square-Contact Representations}

\newcommand{\scalednwarrow}{\ensuremath{\text{\scalebox{0.5}{$\nwarrow$}}}}
\newcommand{\scalednearrow}{\ensuremath{\text{\scalebox{0.5}{$\nearrow$}}}}
\newcommand{\scaledswarrow}{\ensuremath{\text{\scalebox{0.5}{$\swarrow$}}}}
\newcommand{\scaledsearrow}{\ensuremath{\text{\scalebox{0.5}{$\searrow$}}}}

\newcommand{\scaledsymbol}[1]{\ensuremath{\text{\scalebox{0.7}{$#1$}}}}
\newcommand{\Scalednwarrow}{\scaledsymbol{\nwarrow}}
\newcommand{\Scalednearrow}{\scaledsymbol{\nearrow}}
\newcommand{\Scaledswarrow}{\scaledsymbol{\swarrow}}
\newcommand{\Scaledsearrow}{\scaledsymbol{\searrow}}

\newcommand{\UR}{\ensuremath{\Scalednearrow}}
\newcommand{\UL}{\ensuremath{\Scalednwarrow}}
\newcommand{\DR}{\ensuremath{\Scaledsearrow}}
\newcommand{\DL}{\ensuremath{\Scaledswarrow}}

\def\NUR(#1){\ensuremath{\overset{\scalednearrow}{#1}}}
\def\NUL(#1){\ensuremath{\overset{\scalednwarrow}{#1}}}
\def\NDR(#1){\ensuremath{\overset{\scaledsearrow}{#1}}}
\def\NDL(#1){\ensuremath{\overset{\scaledswarrow}{#1}}}
\def\NPAR(#1,#2){\ensuremath{\overset{\scaledsymbol{#2}}{#1}}}

\newcommand{\hatsymbol}{\circ}

\AtBeginDocument{%

}

\begin{document}
\maketitle

\begin{abstract}
A {\em square-contact representation} of a planar graph $G=(V,E)$ maps vertices in $V$ to interior-disjoint axis-aligned squares in the plane and edges in $E$ to adjacencies between the sides of the corresponding squares.~In this paper, we study {\em proper} square-contact representations of planar graphs, in which any two squares are either disjoint or share infinitely many points.  

We characterize the partial $2$-trees and the triconnected cycle-trees allowing for such representations. For partial $2$-trees our characterization uses a simple forbidden subgraph whose structure forces a separating triangle in any embedding. For the triconnected cycle-trees, a subclass of the triconnected simply-nested graphs, we use a new structural decomposition for the graphs in this family, which may be of independent interest.
Finally, we study square-contact representations of general triconnected simply-nested graphs with respect to their \mbox{outerplanarity index.}
\end{abstract}

\section{Introduction}\label{se:intro}

Contact representations of graphs, in which the vertices of a graph are represented by non-overlapping or non-crossing geometric objects of a specific type, and edges are represented by tangencies or other contacts between these objects, form an important line of research in graph drawing and geometric graph theory. For instance, the Koebe--Andreev--Thurston circle packing theorem states that every planar graph is a contact graph of circles~\cite{Ste-CUP-05}. Other types of contact representations that have been studied include contacts of unit circles~\cite{BowDurLof-GD-15,Har-EdM-74}, line segments~\cite{Hli-DM-01}, circular arcs~\cite{AlaEppKau-WADS-15}, triangles~\cite{GonLevPin-DCG-12}, L-shaped polylines~\cite{ChaKobUec-WG-13}, and cubes~\cite{FelFra-SoCG-11}.

Schramm's monster packing theorem~\cite{Sch-PhD-90} implies that every planar graph can be represented by the tangencies of translated and scaled copies of any smooth convex body in the plane. However, it is more difficult to use this theorem for non-smooth shapes, such as polygons: when $k$ bodies can meet at a point, the  monster theorem may pack them in a degenerate way in which separating $k$-cycles, and their interiors, shrink to a single point.

In this paper we study one of the simplest cases of contact representations that cannot be adequately handled using the monster theorem: contact systems of axis-parallel squares. We distinguish between \emph{proper} and \emph{improper} contacts: a proper contact representation disallows squares that meet only at their corners, while an \emph{improper} or \emph{weak} contact representation allows corner-corner contacts of squares. These weak contacts may represent edges of the graph, but they are also allowed between squares that should be non-adjacent. The weak contact representations by squares were shown by Schramm~\cite{Sch-IJM-93} to include all of the proper induced subgraphs of maximal planar graphs that have no separating 3-cycles or 4-cycles. However, a characterization of the graphs \mbox{having proper contact representations by squares remains elusive.}

There is a simple necessary condition for the existence of a proper contact representation by squares. No three properly-touching squares can surround a nonzero-area region of the plane. Therefore, if every embedding of a planar graph $G$ with four or more vertices has a separating triangle or a triangle as the outer face, then $G$ cannot have a proper contact representation. Our main results show that this necessary condition is also sufficient for two notable families of planar graphs: partial 2-trees (including series-parallel graphs) and triconnected cycle-trees (including the Halin graphs). 
However, we show that this necessary condition is not sufficient for the existence of weak and proper square-contact representations of $3$-outerplanar and $2$-outerplanar triconnected simply-nested graphs.

Due to space limits, full versions of omitted or sketched proofs \mbox{are provided \lipicsarxiv{in~\cite{paperarxiv}.}{in~\autoref{apx:proofs}.}}

\section{Preliminaries} 

For standard graph theory concepts and definitions related to planar graphs, their embeddings, and connectivity we refer the reader, e.g., to~\cite{gdbook} and \lipicsarxiv{to~\cite{paperarxiv}.}{to~\autoref{apx:definitions}.}

The graphs considered in this paper are planar, finite, simple, and connected.
We denote the vertex set $V$ and the edge set $E$ of a graph $G=(V,E)$ by $V(G)$ and $E(G)$, respectively. Let $H$ and $G$ be two graphs. We say that $G$ is $H$-free if $G$ does not contain a subgraph isomorphic to $H$. 
The complete $k$-partite graph $K_{|V_1|,\dots,|V_k|}$ is the graph $(V = \bigcup^k_{i=1} V_i, E = \bigcup_{i<j} V_i \times V_j)$.

\paragraph{\textbf{Series-parallel graphs and partial 2-trees.}}
A {\em two-terminal series-parallel} graph $G$ with source $s$ and target $t$ can be recursively defined as follows: 
\begin{inparaenum}[(i)]
\item Edge $st$ is a two-terminal series-parallel graph. 
Let $G_1,\dots,G_k$ be two-terminal series-parallel graphs and let $s_i$ and $t_i$ be the source and the target of $G_i$, respectively, with $1 \leq i \leq k$. 
\item The {\em series composition} of $G_1,\dots,G_k$ obtained by identifying $s_i$ with $t_{i+1}$, for $i=1,\dots,k-1$, is a two-terminal series-parallel graph with source $s_k$ and target $t_1$; and 
\item the {\em parallel composition} of $G_1,\dots,G_k$ obtained by identifying $s_i$ with $s_1$ and $t_i$ with $t_1$, for $i=2,\dots,k$, is a two-terminal series-parallel graph with source $s_1$ and target $t_1$.
\end{inparaenum}

A {\em series-parallel graph} is either a single edge or a two-terminal series-parallel graph with the addition of an edge, called {\em reference edge} joining $s$ and $t$. Clearly, series-parallel graphs are $2$-connected.
%
A series-parallel graph $G$ with reference edge $e$ is naturally associated with a rooted 
tree $T$, called the {\em SPQ}-tree of $G$. Each internal node of $T$, with the exception of the one associated with $e$, corresponds to a two-terminal series-parallel graph. Nodes of $T$ are of three types: {\em S-}, {\em P-}, and {\em Q-nodes}.
Further, tree $T$ is rooted to the Q-node corresponding to $e$.
%

Let $\mu$ be a node of $T$ with terminals $s$ and $t$ and children $\mu_1,\dots,\mu_k$, if any. 
Node $\mu$ has an associated 
multigraph, called the \emph{skeleton} of $\mu$ and denoted by $\skel_\mu$, containing a {\em virtual edge} $e_i=s_it_i$, for each child $\mu_i$ of $\mu$.
 Skeleton $\skel_\mu$ shows how the children of $\mu$, represented by ``virtual edges'', are arranged into $\mu$. 
The skeleton $\skel_\mu$ of $\mu$ is:
\begin{inparaenum}[(i)]
\item edge $st$, if $\mu$ is a leaf Q-node,
\item the multi-edge obtained by identifying the source $s_i$ and the target $t_i$ of each virtual edge $e_i$, for $i=1,\dots,k$, with a new source $s$ and and new target $t$, respectively, or
\item the path $e_1, \dots, e_k$, where virtual edge $e_i$ and $e_{i+1}$ share vertex $s_i=t_{i+1}$, with $1 \leq i < k$.
\end{inparaenum} 
If $\mu$ is an S-node, then we denote by $\length(\mu)$ the length of $skel_\mu$, i.e., $\ell(\mu)=k$.

Also, for each virtual edge $e_i$ of $\skel_\mu$, recursively replace $e_i$ with the skeleton $\skel_{\mu_i}$ of its corresponding child $\mu_i$. The two-terminal series-parallel subgraph of $G$ that is obtained in this way is the \emph{pertinent graph} of $\mu$ and is denoted by $G_\mu$.
We have that $G_\mu$ is:
\begin{inparaenum}[(i)]
\item edge $st$, if $\mu$ is a Q-node,
\item the series composition of the two-terminal series-parallel graphs $G_{\mu_1},\dots,G_{\mu_k}$, if $\mu$ is an S-node, and
\item the parallel composition of the two-terminal series-parallel graphs $G_{\mu_1},\dots,G_{\mu_k}$, if $\mu$ is a P-node.
\end{inparaenum} 
We denote by 
$G^-_\mu$ the subgraph of $G_\mu$ obtained by removing from it terminals $s$ and $t$ together with their incident edges.

A {\em $2$-tree} is a graph that can be obtained from an edge by repeatedly adding a new vertex connected to two adjacent vertices. 
Every $2$-tree is  planar and $2$-connected. 
A {\em partial $2$-tree} is a subgraph of a $2$-tree. Equivalently, {\em partial $2$-tree} can be defined as the $K_4$-minor-free graphs. 
In particular, the series-parallel graphs are
exactly the $2$-connected partial $2$-trees.

\paragraph{\textbf{Simply-nested graphs.}}
Let $G$ be an embedded planar graph and let $G_1,\dots,G_k$ be the sequence of embedded planar graphs such that $G_1=G$, $G_{i+1}$ is obtained from $G_i$ be removing all the vertices incident to the outer face of $G_i$ together with their incident edges, and $G_k$ is outerplanar. 
We say that the embedding of $G$ is {\em $k$-outerplanar}. A graph is {\em $k$-outerplanar} if it admits a
$k$-outerplanar embedding. 
The set $V_i$ of vertices incident to the outer face of $G_i$ is the {\em $i$-th level} of $G$. 
A $k$-outerplanar graph is {\em simply-nested}~\cite{c-fhccpg-90} if, for $i=1,\dots,k-1$, graphs $G[V_i]$ are chordless cycles and $G[V_k]$ is either a cycle or a tree. 

We define {\em cycle-trees} and {\em cycle-cycles} the $2$-outerplanar simply-nested graphs whose internal \mbox{level is a tree and a cycle, respectively.}
The $2$-outerplanar $3$-connected simply-nested graphs have a nice geometric interpretation.
Similarly to the Halin graphs, which are the graphs of polyhedra containing a face that share an edge with all other faces, 
$3$-connected cycle-trees are the graphs of polyhedra containing a face touched by all other faces. 
Analogously, the $3$-connected cycle-cycle graphs with no chords on the inner cycle are the graphs of polyhedra in which there exist two disjoint faces that are both touched by all other faces.

\paragraph{\textbf{Square-contact representations.}} Let $G=(V,E)$ be a planar graph. A {\em square-contact representation} $\Gamma$ of $G$ maps each vertex $v \in V$ to an axis-aligned square $S_{\Gamma}(v)$ in the plane, such that, for any two vertices $u,v \in V$, squares $S_{\Gamma}(u)$ and $S_{\Gamma}(v)$ are interior-disjoint, and the sides of $S_{\Gamma}(u)$ and $S_{\Gamma}(v)$ touch if and only if $uv \in E$. A square-contact representation of $G$ is {\em proper} if any two touching squares share infinitely many points, i.e., they cannot share only a corner point, and {\em non-proper}, otherwise. When the square-contact representation is clear from the context, we may choose to drop the $\Gamma$ subscript and just use $S(v)$ to refer to the square for vertex $v$.
In the remainder of the paper, we only consider proper square-contact representations and refer to such representations simply as square-contact representations.

\paragraph{\textbf{Geometric transformations.}} Let $G$ be planar graph and let  $\Gamma$ be a square-contact representation of $G$. Also, let $p$ be any point in $\Gamma$. 
We define the $\UR$-, $\UL$-, $\DL$-, and $\DR$-{\em quadrant} of $p$ in $\Gamma$ as the first, second, third, and fourth
quadrant around $p$, respectively. 
Suppose that the half-lines delimiting the $\DL$-quadrant of $p$ in $\Gamma$ do not intersect the interior of any square in $\Gamma$. Also, let $\Gamma'$ be the part of $\Gamma$ lying in the $\DL$-quadrant of $p$.
Then, a {\em \NDL(p)-scaling} of $\Gamma$ {\em by a factor $\alpha>0$} is a square-contact representation $\Gamma^*$ defined as follows; see, e.g., \autoref{fig:finaledge}.
Initialize $\Gamma^*=\Gamma$ and remove from $\Gamma^*$ the drawing of the squares contained in the interior of $\Gamma'$. Then, insert into $\Gamma^*$ a copy $\Gamma''$ of $\Gamma'$ scaled by $\alpha$ such that the upper-right corner of $\Gamma''$ coincides with $p$. Clearly, depending on the scale factor $\alpha$, drawing $\Gamma^*$ may or may not be a square-contact representation of $G$ (as adjacencies may be lost or gained). In the following, we refer to the case in which $\alpha>1$ simply as a {\em \NDL(p)-scaling} of $\Gamma$ and to the case in which $0 < \alpha < 1$ as a {\em negative \NDL(p)-scaling} of $\Gamma$.
The definition of {\em \NPAR(p,\hatsymbol)-scaling} and {\em negative \NPAR(p,\hatsymbol)-scaling}, 
with $\hatsymbol \in \{\UL,\DR,\UR\}$, is analogous. 
Finally, let $v$ be a vertex of $G$ and let $x$, $y$, $z$, and $w$ be the upper-left, lower-left, lower-right, and upper-left corner points of $S(v)$ in $\Gamma$. A {\em \NUL(v)-scaling}, {\em \NDL(v)-scaling}, {\em \NDR(v)-scaling}, {\em \NUR(v)-scaling} of $\Gamma$ is a {\em \NUL(x)-scaling}, {\em \NDL(y)-scaling}, \mbox{{\em \NDR(z)-scaling}, {\em \NUR(w)-scaling} of $\Gamma$, respectively.}

\section{Partial 2-Trees}\label{se:sp}
In this section, we study square-contact representations of partial $2$-trees and give the following simple characterization for graphs in this family admitting such representations.

\begin{theorem}\label{th:main}
Let $G$ be a partial $2$-tree. Then, the following statements are equivalent:
\begin{enumerate}[(i)]
\item $G$ is $K_{1,1,3}$-free,
\item $G$ admits an embedding without separating triangles, and
\item $G$ admits a square-contact representation.
\end{enumerate}
\end{theorem}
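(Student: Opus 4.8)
The plan is to establish the two equivalences $(i)\Leftrightarrow(ii)$ and $(ii)\Leftrightarrow(iii)$, which together give the theorem. Two of the four implications require no construction and I would dispatch them first. For $(iii)\Rightarrow(ii)$, note that a square-contact representation $\Gamma$ of $G$ induces a planar embedding. If some triangle $xyz$ of $G$ were separating in this embedding, then the three pairwise properly-touching squares $S(x),S(y),S(z)$ would have to enclose the squares of all vertices drawn inside $xyz$; but three properly-touching squares cannot surround a region of nonzero area (the necessary condition recalled in the introduction), a contradiction, so the induced embedding witnesses $(ii)$. For the contrapositive $\neg(i)\Rightarrow\neg(ii)$, that is $(ii)\Rightarrow(i)$, suppose $G$ contains a $K_{1,1,3}$ with adjacent hubs $a,b$ and vertices $c_1,c_2,c_3$ each adjacent to both. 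The edge $ab$ together with the three paths $a c_i b$ are four internally disjoint $a$--$b$ routes, occurring in some cyclic order around the pair $\{a,b\}$ in any planar embedding of $G$. The middle of the three paths $a c_i b$ in this order, say $a c_j b$, leaves one of the remaining two vertices on each of its sides, so the $3$-cycle $a c_j b$ is separating; hence every embedding of $G$ has a separating triangle. I would isolate this last fact as a claim, as it is the combinatorial core of the forbidden-subgraph characterization.

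For $(i)\Rightarrow(ii)$ I would argue over the SPQ-tree of $G$, first reducing to the $2$-connected case through the block-cut tree (each block being series-parallel). Every $3$-cycle of a series-parallel graph is associated with a P-node whose skeleton contains a virtual edge $st$ together with a child realizing a path $s\,c\,t$, and such a triangle is separating only when material is forced on both of its sides. I would show that when $G$ is $K_{1,1,3}$-free the flips at the P-nodes can be chosen, bottom-up, so that no such $3$-cycle is separating, thereby producing the desired embedding. Along the way one fixes a non-triangular outer face, which exists for every $2$-connected series-parallel graph on at least four vertices (otherwise all faces would be triangles and the graph would be a triangulation, hence contain a $K_4$ minor); the case $G=K_3$ is handled directly.

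The substantial direction is $(ii)\Rightarrow(iii)$, and I expect the construction to be the main obstacle. Starting from an embedding with no separating triangle and a non-triangular outer face, I would induct over the SPQ-tree, maintaining for each node $\mu$ with terminals $s,t$ a square-contact representation $\Gamma_\mu$ of the pertinent graph $G_\mu$ in which the terminal squares $S(s)$ and $S(t)$ occupy fixed canonical positions (e.g.\ at opposite corners of the bounding box of $\Gamma_\mu$), so that representations can be composed. A Q-node is two squares sharing a side; an S-node concatenates the children's representations in a staircase that identifies consecutive terminals; and a P-node nests the children's representations between the shared terminal squares $S(s),S(t)$. The \NDL(\cdot)-, \NUR(\cdot)-, and related scalings defined in the preliminaries are the tool for resizing a child so its interface matches the terminals while keeping every contact proper (no corner-only touching). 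The delicate point is the P-node, where several children must touch both $S(s)$ and $S(t)$ along segments simultaneously: the no-separating-triangle hypothesis (equivalently, $K_{1,1,3}$-freeness) is exactly what guarantees enough angular room around the two terminal squares to place all children properly. I would then verify that the invariant is preserved in each case and that the final drawing realizes the chosen outer face, completing the induction.
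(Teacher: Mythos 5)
Your handling of the two easy implications is correct: $(iii)\Rightarrow(ii)$ is the paper's necessary condition, and your cyclic-order argument that a $K_{1,1,3}$ subgraph forces a separating triangle in every embedding is a valid, explicit version of what the paper only asserts. Structurally, however, you take on more than the paper does: the paper proves the cycle $(iii)\Rightarrow(ii)\Rightarrow(i)\Rightarrow(iii)$, so it never needs $(i)\Rightarrow(ii)$ as a standalone step, whereas your plan needs it (your construction takes (ii) as hypothesis) and your sketch of it --- choosing P-node flips bottom-up --- is the thinnest part of the write-up. Moreover, your construction runs over an SPQ-tree, which presupposes $2$-connectivity, but you never say how square-contact representations of different blocks are merged at a cut vertex; gluing two drawings at a shared square is genuinely awkward. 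The paper sidesteps both issues at once: \autoref{le:split} augments a $K_{1,1,3}$-free partial $2$-tree by degree-$2$ vertices that merge consecutive blocks (preserving $K_{1,1,3}$-freeness and $K_4$-minor-freeness), builds a representation of the resulting series-parallel graph from hypothesis (i) alone (\autoref{th:construction}), and deletes the extra squares at the end.

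The genuine gap is in your inductive invariant, exactly at the point you call ``delicate.'' You propose one canonical form per node: terminals at opposite corners of the bounding box, everything else nested between them. No single form can work, because the shape required of a child's drawing depends on its parent and siblings. At a P-node all children must interface with the \emph{same} two terminal squares, and every \emph{critical} child (one whose pertinent graph contains the path $s$--$x$--$t$ with $sx,xt\in E$) needs $S(x)$ to touch both $S(s)$ and $S(t)$. If the P-node contains the edge $st$, then $S(s)$ and $S(t)$ touch, and there are only two wedge-shaped regions beside their contact where a third square can touch both; so $K_3$ forces a wedge placement, and $K_4$ minus an edge (two critical children plus the edge $st$) forces the interior drawing to straddle \emph{both} sides of the terminal pair. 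If instead the P-node has no edge $st$, then $S(s)$ and $S(t)$ must \emph{not} touch, and $K_{2,3}$ (three critical children) forces its three middle squares to be stacked in a strip spanning between the separated terminals --- a pipe, not a corner arrangement; a corner form that pins each middle square against fixed sides of both terminals admits at most one such child. All three graphs are $K_{1,1,3}$-free series-parallel graphs, so any induction must pass each of them upward as a reusable component, and the strip form and the wedge form are mutually incompatible with a single invariant. This is precisely why the paper maintains several interchangeable drawings per node --- L-shape, pipe, and rectangular drawings for S-nodes (\autoref{le:A}, \autoref{le:B}, \autoref{le:C}) and the seven canonical drawings H0, V0, H1, H1$^\diamond$, V1, D1, D1$^\diamond$ for P-nodes (\autoref{le:no-intra-terminals-edge}, \autoref{le:intra-terminals-edge}) --- driven by the classification of S-nodes into Types A/B/C and of P-nodes into good, almost-bad, bad, and forbidden, which records how many critical children occur and propagates up the tree which shapes each node must supply. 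Your intuition that $K_{1,1,3}$-freeness ``guarantees enough angular room'' is the right one, but converting it into an induction that closes is the entire content of the paper's proof, and your proposal leaves that machinery unspecified.
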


In order to prove \autoref{th:main}, we first show that, without loss of generality, we can restrict our attention to the biconnected partial 2-trees, i.e., the series-parallel graphs.

\begin{restatable}{lemmax}{Lemmatwotreetosp}\label{le:split}
Let $G$ be a $K_{1,1,3}$-free partial $2$-tree. Then, there exists a $K_{1,1,3}$-free series-parallel graph $G^*$ such that $G \subset G^*$ 
and $G$ admits a square-contact representation if~$G^*$~does.
\end{restatable}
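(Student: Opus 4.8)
The plan is to build $G^*$ by inserting only new \emph{degree-$2$} vertices, never adding an edge between two vertices of $G$. This single design choice makes the representation implication immediate: if $\Gamma^*$ is a square-contact representation of $G^*$, then deleting from $\Gamma^*$ the squares of the vertices in $V(G^*)\setminus V(G)$ yields a drawing $\Gamma$ in which two squares $S(x),S(y)$ with $x,y\in V(G)$ touch exactly when $xy\in E(G^*)$. Since we never join two original vertices, $G^*[V(G)]=G$, so $\Gamma$ realizes precisely the edges of $G$ and is proper. Thus the whole burden falls on constructing a series-parallel, $K_{1,1,3}$-free supergraph $G^*$ of this restricted form.

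Because $G$ is connected, I would proceed by induction on the number of blocks. If $G$ is $2$-connected (or a single edge) it is already series-parallel and I take $G^*=G$. Otherwise I pick a leaf block $B$ of the block-cut tree, let $c$ be its unique cut vertex, and let $R$ be the subgraph induced by the remaining vertices together with $c$, so that $V(B)\cap V(R)=\{c\}$. I then choose a neighbor $u$ of $c$ in $B$ and a neighbor $w$ of $c$ in $R$, add a new vertex $d$, and insert the subdivided connection $u\,d\,w$, i.e.\ the edges $ud$ and $dw$. The resulting $4$-cycle $u,c,w,d$ merges $B$ with the block of $R$ carrying the edge $cw$, strictly decreasing the number of blocks, so iterating produces a $2$-connected graph, which (being a $2$-connected partial $2$-tree) is series-parallel.

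The crux is showing that each intermediate graph remains a partial $2$-tree (equivalently $K_4$-minor-free, equivalently of treewidth at most $2$) and $K_{1,1,3}$-free. For the treewidth bound I would take width-$2$ tree decompositions of the two sides of $c$ and glue them through a new bag $\{u,c,w\}$; this is legitimate because $uc$ and $cw$ are edges and $c$ separates $u$ from $w$, and a further bag $\{u,d,w\}$ then accommodates $d$ together with $ud$ and $dw$ without exceeding width $2$. For $K_{1,1,3}$-freeness the key point is that $uw\notin E(G^*)$, precisely because the connection is subdivided by $d$: a copy of $K_{1,1,3}$ is $K_{2,3}$ plus the edge between its two degree-$4$ vertices, so a degree-$2$ vertex such as $d$ could only sit in the part of size $3$, which would force its two neighbors $u$ and $w$ to be adjacent. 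As they are not, every $K_{1,1,3}$ in $G^*$ must avoid $d$ and therefore already lie in $G$, contradicting the hypothesis.

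I expect the main obstacle to be exactly this simultaneous preservation of $K_4$-minor-freeness and $K_{1,1,3}$-freeness: bridging the cut with a \emph{direct} edge $uw$ could create either a $K_4$ minor or a $K_{1,1,3}$, and both are ruled out only by subdividing the new connection with $d$ and by always anchoring it at neighbors of the cut vertex. Once this construction and its two invariants are established, the base case and the representation implication described above are routine, completing the reduction from $K_{1,1,3}$-free partial $2$-trees to series-parallel graphs.
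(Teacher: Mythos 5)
Your proposal is correct and takes essentially the same approach as the paper: both arguments repeatedly add a new degree-$2$ vertex joined to one neighbor of a cut vertex in each of two distinct blocks (never adding an edge between original vertices), observe that this merges blocks while preserving $K_4$-minor-freeness and $K_{1,1,3}$-freeness---the latter exactly because the new vertex's two neighbors are non-adjacent, so no copy of $K_{1,1,3}$ can use it---and recover a representation of $G$ by deleting the squares of the added vertices. The only difference is in the bookkeeping: the paper anchors the new vertex at two edges consecutive around the cut vertex in a planar embedding and argues $K_4$-minor-freeness via the resulting $2$-separation, whereas you pick a leaf block and glue width-$2$ tree decompositions through the bags $\{u,c,w\}$ and $\{u,d,w\}$, which is an equally valid verification of the same step.
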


\newcommand{\proofoflemmatwotreetosp}{
Let $\beta(H)$ denote the number of blocks of a graph $H$.
We show that $G$ can be augmented to a $K_{1,1,3}$-free partial $2$-tree $G'$ such that $\beta(G')=\beta(G)-1$, by adding to $G$ a new vertex connected to two vertices in $V(G)$ incident to the same cut-vertex of $G$, belonging to different blocks, and sharing a common face. Hence, repeating such an augmentation eventually yields a $2$-connected partial $2$-tree $G^*$ that is $K_{1,1,3}$-free. Also, by construction, two vertices in $V(G)$ are adjacent in $G^*$ if and only if they are adjacent in $G$. Therefore, a square-contact representation of $G$ can be  derived from a square-contact representation $\Gamma^*$ of $G^*$, by removing from $\Gamma^*$ all the squares corresponding to vertices in $V(G^*) \setminus V(G)$.

Suppose that $\beta(G)>1$, as otherwise $G$ is $2$-connected and we can set $G^*=G$. Consider a planar embedding $\cal E$ of $G$ and a cut-vertex $c$ of $G$. Let $e_1=(c,u)$ and $e_2=(c,v)$ be two edges incident to $c$ such that (i) $e_1$ and $e_2$ belong to distinct blocks $B_1$ and $B_2$ of $G$, respectively, and (ii) $e_1$ precedes $e_2$ in the clockwise order of the edges incident to $c$ in $\cal E$. Let $f$ be the face of $\cal E$ that lies to the right of $e_1$ (to the left of $e_2$) when traversing $e_1$ ($e_2$) from $c$ to $u$ (from $c$ to $v$). Augment $G$ to graph $G'$ by adding a new vertex $w$ and edges $e_1'=(w,u)$ and $e_2'=(w,v)$ inside $f$.
Clearly, blocks $B_1$ and $B_2$ of $G$ are now ``merged'' in a single block $B$ in $G'$. Also, $G'$ is a partial $2$-tree. In fact, since $G$ is a partial $2$-teee and since any path connecting a vertex in $V(B_1)$ and a vertex in $V(B_2)$ must pass either through $c$ or trough $w$, graph $G'$ does not contain $K_4$ as a minor. Finally, $G'$ is  $K_{1,1,3}$-free. This is due to the fact that
$G$ is $K_{1,1,3}$-free and that, since $u$ and $v$ belong to different blocks of $G$, these vertices do not belong to an induced $K_{1,1,2}$ subgraph of $G$.
\xspace{}
}


\begin{proof}[Sketch]
Let $\beta(H)$ denote the number of blocks, i.e., the maximal biconnected components, of a graph $H$.
Adding to $G$ a new vertex connected to two vertices in $V(G)$ incident to the same cut-vertex of $G$, belonging to different blocks, and sharing a common face
yields a graph $G'$ such that $\beta(G')=\beta(G)-1$. It is easy to see that $G'$ is $K_{1,1,3}$-free and that $G'$ does not contain $K_4$ as a minor.
Hence, repeating such an augmentation eventually yields a $2$-connected partial $2$-tree $G^*$ that is $K_{1,1,3}$-free. Also, by construction, two vertices in $V(G)$ are adjacent in $G^*$ if and only if they are adjacent in $G$. Therefore, a square-contact representation of $G$ can be  derived from a square-contact representation $\Gamma^*$ of $G^*$, by removing from $\Gamma^*$ all the squares corresponding to vertices in $V(G^*) \setminus V(G)$.
\end{proof}

As already observed in \autoref{se:intro},  an embedding without separating triangles is  necessary  for the existence of a square-contact representation, and $K_{1,1,3}$ has no embedding without separating triangles. Thus,
$(iii)\Rightarrow (ii) \Rightarrow(i)$ are immediate. To complete the proof of \autoref{th:main}, we show how to construct a square-contact representation of any $K_{1,1,3}$-free series-parallel graph, proving that  $(i) \Rightarrow (iii)$. We formalize this result in the next theorem.

\begin{theorem}\label{th:construction}
Every $K_{1,1,3}$-free series-parallel graph admits a square-contact representation.
\end{theorem}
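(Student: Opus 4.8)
The plan is to proceed by induction on the structure of the series-parallel graph as captured by its SPQ-tree, constructing not merely a square-contact representation but a representation satisfying an invariant strong enough to be composed. The natural invariant is that the pertinent graph $G_\mu$ of each node $\mu$, with terminals $s$ and $t$, is represented inside an axis-aligned rectangular region so that the two terminal squares $S(s)$ and $S(t)$ occupy prescribed positions on the boundary of that region---for instance, $S(s)$ touching one side and $S(t)$ touching the opposite (or a designated) side---leaving the rest of the boundary free of obstructions. This boundary-control is exactly what the $\NDL(v)$- and $\NPAR(v,\hatsymbol)$-scaling operations from the preliminaries are designed to supply: they let us rigidly rescale a sub-representation about a corner of a terminal square without disturbing adjacencies elsewhere, so that two sub-representations can be docked together along a shared terminal.

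The key steps, in order, are as follows. First I would fix the base case: a Q-node is a single edge $st$, represented by two squares sharing a full side, trivially satisfying the invariant. Next, for an \emph{S-node} with children $\mu_1,\dots,\mu_k$ arranged in a path, I would lay the child representations of $G_{\mu_1},\dots,G_{\mu_k}$ consecutively, identifying the target square of $G_{\mu_i}$ with the source square of $G_{\mu_{i+1}}$ at the shared terminal; the scaling operations are used to match the sizes and corner positions of the squares being identified so the splice is seamless. The delicate case is the \emph{P-node}, which is a parallel composition of $G_{\mu_1},\dots,G_{\mu_k}$ all sharing the two terminals $s$ and $t$. Here the two terminal squares $S(s)$ and $S(t)$ are common to every child, and the $k$ pertinent pieces $G^-_{\mu_i}$ (the interiors, with terminals and their incident edges removed) must be packed into disjoint pockets lying between $S(s)$ and $S(t)$ without forcing a separating triangle or requiring a third square to touch both terminals along a degenerate corner.

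The main obstacle I expect is controlling the multiplicity and placement in the P-node case while respecting the $K_{1,1,3}$-free hypothesis. A P-node with three or more nontrivial children that each contribute a terminal-adjacent vertex is precisely what would create a $K_{1,1,3}$ (the two terminals together with three vertices each adjacent to both), so the hypothesis must be invoked to bound how many children can be ``thick'' near a terminal; the remaining children are single edges (Q-nodes) or contribute vertices adjacent to only one terminal. Concretely, I would argue that after removing Q-node children, a $K_{1,1,3}$-free P-node has at most two children whose pertinent graphs make a vertex adjacent to both $s$ and $t$, and these two can be placed on opposite sides of the $s$--$t$ contact, with any further children nested into the freed-up regions using negative scalings. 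The second obstacle, more technical than conceptual, is verifying that the scaling factors can always be chosen so that no spurious side-to-side contact is created between squares belonging to different children (which would introduce edges not present in $G$); this is handled by making the inserted pieces strictly smaller than the available pocket, so that contacts with neighboring pieces degenerate to at most isolated corner points, which are then eliminated by an additional infinitesimal negative scaling. Combining the base case with the S- and P-node inductive constructions yields a valid representation of $G_{\text{root}}$, and adding back the reference edge (the root Q-node) completes the square-contact representation of $G$.
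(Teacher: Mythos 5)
Your overall strategy matches the paper's: a bottom-up induction over the SPQ-tree that maintains, for each pertinent graph, a square-contact representation with controlled geometry, docking child representations via the scaling operations, and invoking $K_{1,1,3}$-freeness to bound the number of ``critical'' children of a P-node. However, there are two genuine gaps. First, your structural claim --- that a $K_{1,1,3}$-free P-node has, after discarding Q-node children, at most two children whose pertinent graphs contain a vertex adjacent to both terminals --- is false as stated: it requires the additional hypothesis that the edge $st$ is present. Without that edge no $K_{1,1,3}$ arises. For instance, $K_{2,n}$ is a triangle-free (hence $K_{1,1,3}$-free) series-parallel graph whose SPQ-tree contains a P-node with $n-1$ critical children and no terminal edge, and it does admit a representation (all the middle squares can be stacked in the channel between $S(s)$ and $S(t)$). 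This is exactly why the paper's notion of a \emph{forbidden} P-node, and \autoref{obs:no-forbidden}, are restricted to P-nodes containing an edge between their terminals, and why the no-edge case must be handled separately with an unbounded number of critical children (the H0/V0 drawings of \autoref{le:no-intra-terminals-edge}).

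Second, and more seriously, your single inductive invariant --- everything drawn inside one rectangle with the terminal squares at prescribed boundary positions --- cannot be maintained, and your own description breaks it. For a bad P-node ($st \in E(G_\mu)$ with two critical children) the two middle vertices must occupy the two opposite concave corners at the $s$--$t$ contact, as you yourself note (``placed on opposite sides of the $s$--$t$ contact''); but then the interior of the drawing straddles the terminal squares and is not confined to any single pocket, so the drawing you produce does not satisfy the invariant you assume when consuming it one level up. You owe an argument for how such a straddling drawing composes --- e.g., when this bad P-node is a child of a series composition which is in turn a child of another P-node with a terminal edge --- and this propagation is precisely where the difficulty of the theorem concentrates. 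The paper's proof is organized around resolving it: S-nodes are classified into \texttt{Type A}/\texttt{B}/\texttt{C} and receive L-shape, pipe, or rectangular drawings respectively (\red{Lemmata}~\ref{le:A}, \ref{le:B}, \ref{le:C}); P-nodes receive one of seven canonical drawings (H0, V0, H1, H1$^\diamond$, V1, D1, D1$^\diamond$); and \autoref{obs:edge-implications} guarantees that the inflexible objects (bad P-nodes, hence \texttt{Type B} S-nodes, which admit only pipe drawings) never occur beneath a P-node that has a terminal edge. Without some such typing of drawings your induction hypothesis is simultaneously false (for bad P-nodes) and too weak (to compose them); with it, you essentially recover the paper's proof.
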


\begin{figure}[t!]
   \centering
   \subcaptionbox{\label{fig:structure-a}}
  {\includegraphics[height=.2\columnwidth,page=1]{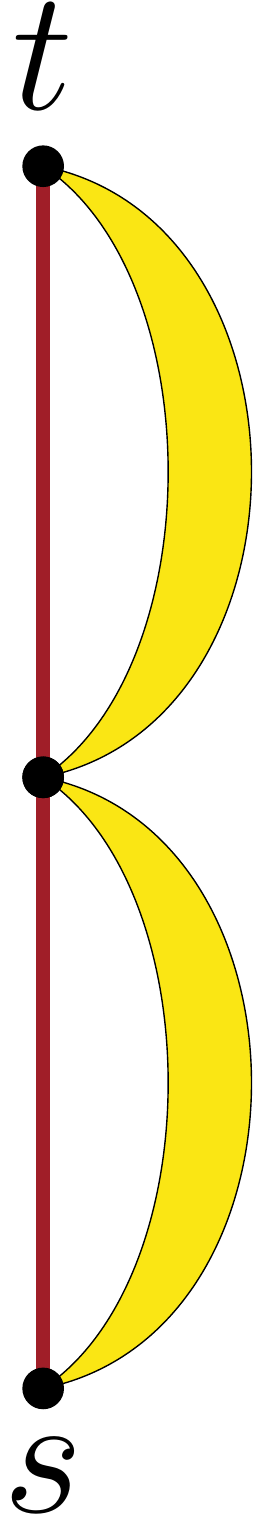}}
  \hfil
  \subcaptionbox{\label{fig:structure-b}}
  {\includegraphics[height=.2\columnwidth,page=3]{structure}}
  \hfil
  \subcaptionbox{\label{fig:structure-c}}
  {\includegraphics[height=.2\columnwidth,page=4]{structure}}
  \hfil
    \subcaptionbox{\label{fig:structure-d}}
  {\includegraphics[height=.2\columnwidth,page=6]{structure}}
    \hfil
  \subcaptionbox{\label{fig:structure-e}}
  {\includegraphics[height=.2\columnwidth,page=5]{structure}}
  \hfil
  \subcaptionbox{\label{fig:structure-f}}
  {\includegraphics[height=.2\columnwidth,page=7]{structure}}
  \caption{(a) A critical S-node, (b) an almost-bad P-node, (c) a bad P-node, (d) a forbidden P-node, (e) an S-node of \texttt{Type B}, and (f) an S-node of \texttt{Type C}. Yellow, green, and blue regions, represent parallel compositions of any number of S-nodes, at most one critical S-node and any number of non-critical S-nodes, and any number of non-critical S-nodes, respectively.}
  \label{fig:structure}
\end{figure}

Let $G$ be a series-parallel graph and let $T$ be the SPQ-tree of $G$ with respect to any reference edge.
We start with some definitions; refer to \autoref{fig:structure}.
Let $\mu$ be an S-node in $T$.
We say that $\mu$ is
{\em critical},  if $\skel_\mu=$$s$--$x$--$t$ and the two children of $\mu$ both contain an edge between their terminals, i.e., $sx, xt \in E(G_\mu)$, and 
{\em non-critical}, otherwise.
Let $\mu$ be a P-node in $T$ containing an edge between its terminals. We say that $\mu$ is
{\em almost bad}, if it has exactly one critical child, 
{\em bad}, if it has exactly two critical children, and
{\em forbidden}, if it has more than two critical children.
Finally, let $\mu$ be a P-node in $T$. We say that $\mu$ is 
{\em good}, if it is neither bad, nor almost bad, nor forbidden.

We now assign one of three possible types to each S-node $\mu$ in $T$ as follows (for each child $\mu_i$ of $\mu$, we denote the two terminals of $G_{\mu_i}$ as $s_i$ and $t_i$).

\begin{description}
\item[Type A] Node $\mu$ is of \texttt{Type A}, if either $\length(\mu)>2$ or $\length(\mu)=2$ and at least one child of $\mu$ does not contain an edge between its terminals, i.e., $|\{s_1t_1,s_2t_2\} \cap E(G_\mu)| < 2$.
\item[Type B] Node $\mu$ is of \texttt{Type B}, if $\length(\mu)=2$, 
all its children contain an edge between their terminals, and at least one of them is a bad P-node.
\item[Type C] Node $\mu$ is of \texttt{Type C}, if $\length(\mu)=2$, and all its children contain an edge between their terminals, and none of them is a bad P-node. 
\end{description}

Observe that S-nodes of \texttt{Type B} and of \texttt{Type C} are also critical.

Let $G$ be a $K_{1,1,3}$-free series-parallel graph and let $T$ be the SPQ-tree of $G$ with respect to any reference edge.  We have the following simple observations regarding the P-nodes in $T$.

\begin{observation}\label{obs:no-forbidden}
SPQ-tree $T$ contains no forbidden P-node; refer to \red{\autoref{fig:structure}(d)}.
\end{observation}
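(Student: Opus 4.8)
The plan is to prove Observation~\ref{obs:no-forbidden} by contradiction, showing that a forbidden P-node would force a $K_{1,1,3}$ subgraph, contradicting the hypothesis that $G$ is $K_{1,1,3}$-free. First I would unpack the definition: a forbidden P-node $\mu$ is a P-node that contains an edge between its terminals $s$ and $t$, and that has more than two critical children. So suppose toward a contradiction that $\mu$ is such a node. Since $\mu$ is a P-node containing an edge between its terminals, $st \in E(G)$. Moreover $\mu$ has at least three critical children, say $\mu_1, \mu_2, \mu_3$.

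The key observation is what a critical child contributes. By definition, each $\mu_i$ is a critical S-node, so $\mathit{skel}_{\mu_i} = s$--$x_i$--$t$ for some vertex $x_i$, and both children of $\mu_i$ contain an edge between their terminals, which means $sx_i, x_i t \in E(G_{\mu_i}) \subseteq E(G)$. Because these are children of a common P-node sharing the pole pair $\{s,t\}$, the subdivision vertices $x_1, x_2, x_3$ are pairwise distinct (each lives in the interior of a distinct pertinent graph $G^-_{\mu_i}$). Thus each $x_i$ is adjacent to both $s$ and $t$, and together with the edge $st$ this yields the configuration I want.

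Concretely, I would exhibit the $K_{1,1,3}$ on the vertex set $\{s, t, x_1, x_2, x_3\}$: the two ``singleton'' parts are $\{s\}$ and $\{t\}$, joined by the edge $st$; the size-three part is $\{x_1, x_2, x_3\}$, and every $x_i$ is joined to both $s$ and $t$ by the edges established above. Since $K_{1,1,3}$ as a (not necessarily induced) subgraph requires exactly these edges $st$, $sx_i$, $tx_i$ for $i = 1,2,3$ and no adjacencies among the $x_i$ need be present or absent for the subgraph relation, this is precisely a copy of $K_{1,1,3}$ in $G$. This contradicts the assumption that $G$ is $K_{1,1,3}$-free, so no forbidden P-node can exist.

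The only real care-point, and what I expect to be the main (if minor) obstacle, is justifying rigorously that the $x_i$ are genuinely distinct vertices of $G$ and that all six required edges actually appear in $G$ rather than merely in virtual skeletons. Distinctness follows because the poles of a P-node are shared but the interiors of the pertinent graphs of distinct children are disjoint, so each $x_i \in V(G^-_{\mu_i})$ is unique to its child; the edges $sx_i, x_i t$ are real edges of $G$ precisely because criticality demands each child of $\mu_i$ contain an edge between its terminals, and $st$ is a real edge because $\mu$ contains an edge between its terminals. Once these membership facts are pinned down the $K_{1,1,3}$ appears immediately, so the observation is indeed ``simple'' as the text advertises.
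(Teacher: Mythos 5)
Your proof is correct and is exactly the argument the paper intends: the paper states this as an immediate observation (pointing to Fig.~2(d), which depicts precisely the configuration you describe), since a P-node with the terminal edge $st$ and three critical children $\mu_1,\mu_2,\mu_3$ yields distinct vertices $x_1,x_2,x_3$ each adjacent to both $s$ and $t$, forming a $K_{1,1,3}$ subgraph. Your care about the distinctness of the $x_i$ and about the edges being real edges of $G$ (not just skeleton edges) is exactly the right bookkeeping, and nothing is missing.
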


\begin{observation}\label{obs:edge-implications}
Let $\mu$ be a P-node in $T$ with terminals $s$ and $t$ such that 
$st \in E(G_\mu)$. Then, none of the children of $\mu$ is of \texttt{Type B} and at most two children of $\mu$ are of \texttt{Type C}. 
\end{observation}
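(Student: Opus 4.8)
The plan is to prove both parts by contradiction, in each case exhibiting an explicit copy of $K_{1,1,3}$ and invoking the hypothesis that $G$ is $K_{1,1,3}$-free. For the bound on \texttt{Type C} children I would first recall that every \texttt{Type C} S-node is critical (indeed, any critical S-node is either \texttt{Type B} or \texttt{Type C}). Since $st \in E(G_\mu)$, node $\mu$ contains an edge between its terminals, so by \autoref{obs:no-forbidden} it is not forbidden and therefore has at most two critical children. As each \texttt{Type C} child is critical, at most two children of $\mu$ are of \texttt{Type C}; this part is immediate once the inclusion of \texttt{Type C} S-nodes among the critical nodes is made explicit.

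For the claim that $\mu$ has no \texttt{Type B} child, I would assume such a child $\mu'$ exists and unfold its definition. Being \texttt{Type B}, $\mu'$ is an S-node with $\length(\mu')=2$, say $\skel_{\mu'} = s$--$x$--$t$, both of whose children carry an edge between their terminals; thus $sx,xt \in E$, and together with the edge $st \in E(G_\mu)$ this yields a triangle on $s,x,t$. At least one child of $\mu'$, without loss of generality the one with terminals $s,x$, is a bad P-node, hence contains the edge $sx$ and has exactly two critical children. Each such critical child is a length-$2$ S-node $s$--$y_i$--$x$ with $sy_i,y_ix \in E$, contributing an internal vertex $y_i$ adjacent to both $s$ and $x$. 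The apex pair $\{s,x\}$ is then adjacent to each of $y_1,y_2,t$ (using $sx$ together with $st,xt$ and the two length-$2$ paths), which is exactly the edge set of a $K_{1,1,3}$ once I check that $\{y_1,y_2,t\}$ is independent.

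The main obstacle is precisely this independence check. The vertices $y_1,y_2$ are internal to distinct children of a single P-node and so are distinct and non-adjacent; the delicate point is that neither is adjacent to the terminal $t$ of $\mu'$. Here I would invoke the separation property of SPQ-trees: since $y_1,y_2$ are non-terminal vertices of the pertinent graph $G_\mu$, all of their neighbors lie in $G_\mu$, and tracing through the parallel composition at $\mu$ and then the series composition at $\mu'$ confines their neighbors to the $\{s,x\}$-bounded component in which they sit, which excludes $t$. With independence established, the five vertices $s,x,y_1,y_2,t$ together with the seven edges among them form a subgraph isomorphic to $K_{1,1,3}$, contradicting the hypothesis; the symmetric case, in which the bad child of $\mu'$ has terminals $x,t$, is handled identically with the roles of $s$ and $t$ exchanged.
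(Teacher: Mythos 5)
Your proof is correct, and it is essentially the argument the paper intends: the paper presents this as one of two ``simple observations'' with no written proof, and the natural justification is exactly yours---\texttt{Type B} and \texttt{Type C} S-nodes are critical, so \autoref{obs:no-forbidden} caps their number at two when $st \in E(G_\mu)$, while a \texttt{Type B} child $\mu'$ would produce, through its bad P-node child with terminals $s,x$, two vertices $y_1,y_2$ adjacent to both $s$ and $x$, giving a $K_{1,1,3}$ on $\{s,x\}\cup\{y_1,y_2,t\}$ and contradicting $K_{1,1,3}$-freeness. One remark: since the paper defines $H$-freeness via subgraphs rather than induced subgraphs, the independence of $\{y_1,y_2,t\}$, which you single out as the main obstacle, is not actually needed---exhibiting the seven edges already yields the forbidden subgraph (though your verification of it is sound).
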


We now consider special square-contact representations for the pertinent graphs of the S-nodes in $T$.
Let $\Gamma_\mu$ be a square-contact representation of $G_\mu$. We say that $\Gamma_\mu$ is either an {\em L-shape}, {\em pipe}, or {\em rectangular drawing} of $G_\mu$, if it satisfies the \mbox{following conditions; refer to \autoref{fig:drawings}.}

\begin{description}
\item[Rectangular drawing] 
$S(t)$ lies to the left and above $S(s)$ and
the drawing $\Gamma^-_\mu$ of $G^-_\mu$ in $\Gamma_\mu$ lies to the right of $S(t)$ and above $S(s)$; also, all the squares of $\Gamma^-_\mu$ whose left side (bottom side) is collinear with the right side of $S(t)$ (with the top side of $S(s)$) are adjacent to $S(t)$~(to~$S(s)$).
\item[L-shape drawing] $\Gamma_\mu$ is a rectangular drawing in which 
there exists a rectangular region (red region $R_\emptyset$ in \autoref{fig:drawings}) inside the bounding box of $\Gamma^-_\mu$ whose interior does not intersect any square in $\Gamma^-_\mu$ and whose lower-left corner lies at the intersection point between the vertical line passing through the right side of $S(t)$ and the horizontal line passing through the top side of~$S(s)$.
\item[Pipe drawing] 
$S(t)$ lies to the left of $S(s)$ and
the drawing $\Gamma^-_\mu$ of $G^-_\mu$ in $\Gamma_\mu$ lies to the right of $S(t)$ and to the left of $S(s)$; also, all the squares of $\Gamma^-_\mu$ whose left side (right side) is collinear with the right side of $S(t)$ (with the left side of $S(s)$) are \mbox{adjacent to $S(t)$ (to $S(s)$).}
\end{description}

\begin{figure}[t!]
  \centering
  \includegraphics[height=.2\columnwidth,page=4]{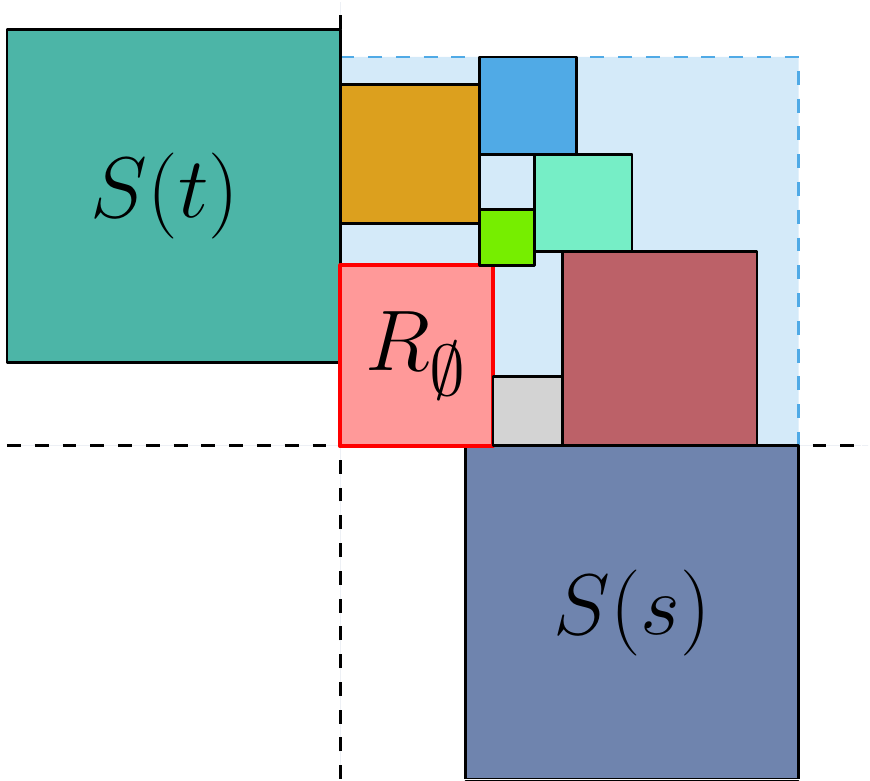}
  \hfil
  \includegraphics[height=.2\columnwidth,page=1]{l-pipe-rectangular}
    \hfil
  \includegraphics[height=.2\columnwidth,page=2]{l-pipe-rectangular}
  \hfil
  \includegraphics[height=.2\columnwidth,page=3]{l-pipe-rectangular}
  \caption{
  From left to right:
  pertinent $G_\mu$ of an S-node $\mu$ with terminals $s$ and $t$,
  L-shape and pipe drawings of $G_\mu$, respectively, 
  and a rectangular drawing of an S-node $\nu$ with pertinent $G_\nu = G_\mu \cup sx$. The L-shape region and horizontal pipe enclosing $G^-_\mu$ and the rectangle enclosing $G^-_\nu$ are shaded~blue.}
  \label{fig:drawings}
\end{figure}

In the following, we generally refer to a drawing of an S-node $\mu$ in $T$ (of $G_\mu$) which is either an L-shape drawing, a pipe drawing, or a rectangular drawing as a \mbox{{\em valid drawing} of $\mu$ (of $G_\mu$).}

Let $\Gamma^-_\mu$ be the square-contact representation of $G^-_\mu$ contained in $\Gamma_\mu$.
Observe that $\Gamma^-_\mu$ lies in the interior of an orthogonal hexagon with an internal angle equal to $270^\circ$, i.e., an {\em L-shape polygon} (or, simply, {\em L-shape}), if $\Gamma^-_\mu$ is an L-shape drawing.
Also, $\Gamma^-_\mu$ lies \mbox{in the interior of a} rectangle whose opposite vertical sides are adjacent to the right side of $S(t)$ and to the left side of~$S(s)$, i.e., a {\em horizontal pipe}, if $\Gamma^-_\mu$ is a pipe drawing. Finally, $\Gamma^-_\mu$ lies in the \mbox{interior of a} rectangle whose left and bottom side are adjacent to the right side of $S(t)$ and \mbox{to the top side} of $S(s)$, respectively, if $\Gamma^-_\mu$ is a rectangular drawing.

\smallskip
\begin{wrapfigure}[11]{R}{.38\textwidth}
    \centering
    \ifarxive
\vspace{-0.2mm}
\else
\vspace{-6.2mm}
\fi
    \includegraphics[page=13, width=.39\textwidth]{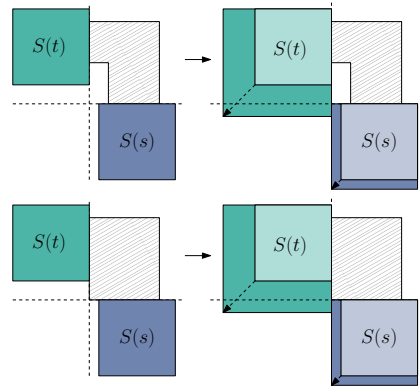}
    \caption{Transforming $\Gamma_\tau$ into $\Gamma_\rho$.}\label{fig:finaledge}
\end{wrapfigure}
\paragraph{Proof of \autoref{th:construction}.} 
In order to prove \autoref{th:construction}, we proceed as follows. 
Let $G$ be a $K_{1,1,3}$-free series-parallel graph and let $T$ be the SPQ-tree of $G$ rooted at a Q-node $\rho$ with terminals $s$ and $t$, whose unique child $\tau$ is an S-node. Observe that such a Q-node always exists, since $G$ is simple, and that node $\tau$ is either of \texttt{Type A} or of \texttt{Type C}, since $G$ is $K_{1,1,3}$-tree. 
We perform a bottom-up traversal in $T$ to construct one or two valid drawings of $G_\mu$, for each S-node $\mu \in T$. \mbox{Namely, we compute:}
\begin{itemize}
	\item an L-shape drawing, if $\mu$ is of \texttt{Type A} (\autoref{le:A}),
	\item a pipe drawing, if $\mu$ is of \texttt{Type B} (\autoref{le:B}), and
	\item both a pipe drawing and a rectangular drawing, if $\mu$ is of \texttt{Type C} (\autoref{le:C}).
\end{itemize}

Thus, when node $\tau$ is considered, we can compute either an L-shape drawing of $G_\tau$, if $\tau$ is of \texttt{Type A}, or a rectangular drawing of $G_\tau$, if $\tau$ is of \texttt{Type C}. 
Further, both such valid drawings $\Gamma_\tau$ of $G_\tau$ can be easily turned into a square-contact representation $\Gamma_{\rho}$ of $G=G_\tau \cup st$, by performing a {\NDL(t)-scaling} and an {\NDL(s)-scaling} of $\Gamma_\tau$ 
in such a way that the right side of $S(t)$ and the left side of $S(s)$ touch; refer to \autoref{fig:finaledge}. This is possible since both in an L-shape drawing and in a rectangular drawing of $G_\tau$ all the squares of $G^-_\tau$ whose left side (bottom side) is collinear with the right side of $S(t)$ (with the top side \mbox{of $S(s)$) are adjacent to $S(t)$ (to $S(s)$).}

Let $\mu$ be an S-node and let $\mu_1,\dots,\mu_k$ be the children of $\mu$ in $T$. 
If each child $\mu_i$ of $\mu$ is a Q-node, then node $\mu$ is of \texttt{Type A}, if $\length(\mu)>2$, and it is of \texttt{Type C}, otherwise. It is not difficult to see that, in the former case, $G_\mu$ admits an L-shape drawing and that, in the latter case, $G_\mu$ admits both a pipe drawing and a rectangular drawing. In the remainder of the section, we consider the case in which $\mu$ has both Q-node and P-node children.

We first show how to construct special square-contact representations of $G_\mu$, that we call {\em canonical drawings}, for any P-node $\mu$ in $T$, assuming that valid drawings have been computed for each S-node child of $\mu$. We distinguish five possible canonical drawings, depending on 
\begin{inparaenum}
\item the number and type of the S-node children of $\mu$ and
\item the presence of edge $st$.
\end{inparaenum}
Each canonical drawing has three variants: 
{\tt vertical} (V),
{\tt horizontal} (H), and
{\tt diagonal} (D).
We name such canonical representations {\em $XY$ drawings}, where $X \in \{V,H,D\}$ denotes the variant of the representation and $Y=1$, if $st \in E(G_\mu)$, and $Y=0$, otherwise. Canonical drawings share the following main property (which, in fact, also holds for valid drawings).

\begin{property}\label{prop:adjacency}
Let $\Gamma_\mu$ be a valid drawing or a canonical drawing of $G_\mu$. Then,
for each vertex $v$ in $V(G^-_\mu)$, it holds that $vs \in E(G_\mu)$ ($vt \in E(G_\mu)$) if:
\begin{enumerate}
\item $S(v)$ has a side that is collinear with a side of $S(s)$ (of $S(t)$) in $\Gamma_\mu$ and 
\item $S(v)$ is separated from $S(s)$ (from $S(t)$) in $\Gamma_\mu$ by the line passing through such a side.
\end{enumerate}
\end{property}

\autoref{prop:adjacency} allows us to modify canonical and valid drawings by appropriate 
\NPAR(s,\hatsymbol)-scaling and \NPAR(t,\hatsymbol)-scaling transformations, with $\hatsymbol \in \{\UL{},\UR{},\DR{}, \DL{}\}$, preserving \mbox{adjacencies between vertices~in~$G_\mu$.}

First, consider a P-node $\mu$ in $T$ with terminals $s$ and $t$ such that $st \notin E(G_\mu)$ and let $\mu_1,\dots,\mu_k$ be the S-node children of $\mu$.
We say that a square-contact representation $\Gamma_\mu$ of $G_\mu$ is an {\em H0 drawing} or a {\em V0 drawing}, if it satisfies the following conditions (in addition to \autoref{prop:adjacency}); refer to \autoref{fig:H}.
\begin{figure}[tb]
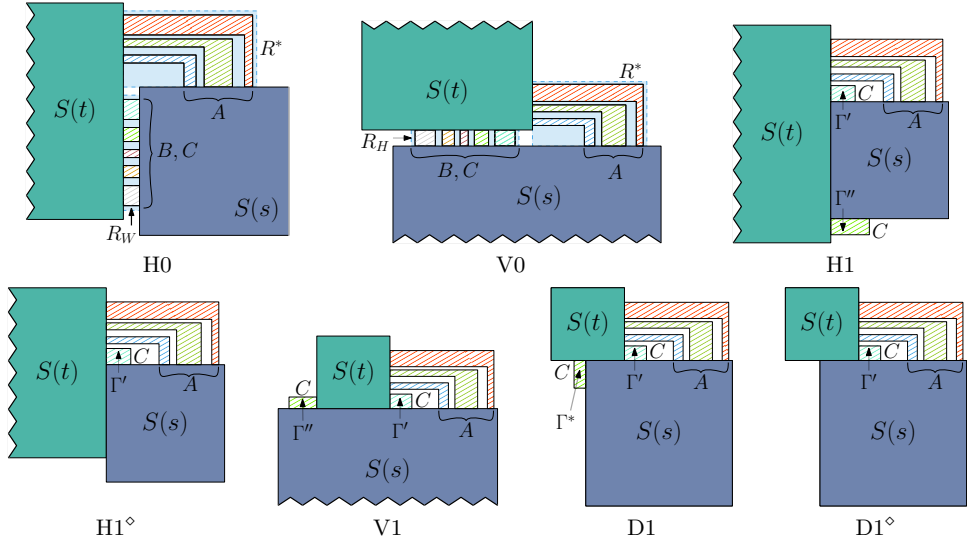

	\centering
	\captionsetup[subfigure]{labelformat=empty}
	\subcaptionbox{H0}
	{\includegraphics[height=.23\columnwidth,page=5]{l-pipe-rectangular}}
	\hfil
	\subcaptionbox{V0}
	{\includegraphics[height=.23\columnwidth,page=12]{l-pipe-rectangular}}
		\hfil
	\subcaptionbox{H1}
	{\includegraphics[height=.23\columnwidth,page=7]{l-pipe-rectangular}}
  \\
  \vspace{-6mm}
  \subcaptionbox{H1$^\diamond$}
  {\includegraphics[height=.23\columnwidth,page=14]{l-pipe-rectangular}}
  \hfil
	\subcaptionbox{V1}
	{\includegraphics[height=.23\columnwidth,page=9]{l-pipe-rectangular}}
  \hfil
  \subcaptionbox{D1}
  {\includegraphics[height=.23\columnwidth,page=10]{l-pipe-rectangular}}
	\hfil
	\subcaptionbox{D1$^\diamond$}
	{\includegraphics[height=.23\columnwidth,page=11]{l-pipe-rectangular}}
	\caption{Canonical drawings of a P-node $\mu$.
The striped regions correspond to L-shapes, horizontal pipes, and rectangles enclosing the square-contact representations of graphs $G^-_{\mu_i}$, for each S-node child $\mu_i$ of $\mu$. Labels $A$, $B$, and $C$ indicate the type of each S-node.
}
	\label{fig:H}
\end{figure}
\medskip
\begin{description}
\item[H0 drawing] 
$S(t)$ lies to the left of $S(s)$, the bottom side of $S(s)$ lies below the bottom side of $S(t)$, and the drawing of $G^-_\mu$ in $\Gamma_\mu$ lies to the right of~$S(t)$, below the top side of $S(t)$, above the bottom side of $S(s)$, and to the left of the right side of $S(s)$. 

\item[V0 drawing] 
$S(t)$ lies above $S(s)$, the left side of $S(s)$ lies to the right of the left side of $S(t)$, and the drawing of $G^-_\mu$ in $\Gamma_\mu$ lies above $S(s)$, to the right of the left side of $S(s)$, below the top side of~$S(t)$, and to the left of the right side of $S(s)$.
\end{description}
\medskip
Now, consider a P-node $\mu$ in $T$ with terminals $s$ and $t$ such that $st \in E(G_\mu)$ and let $\mu_1,\dots,\mu_k$ be the S-node children of $\mu$.
We say that  a square-contact representation $\Gamma_\mu$ of $G_\mu$ is an {\em H1 drawing}, an {\em H1$^\diamond$ drawing}, a {\em V1 drawing}, a {\em D1 drawing}, or a {\em D1$^\diamond$ drawing},  if it satisfies the following conditions (in addition to \autoref{prop:adjacency});  refer to \autoref{fig:H}.
\medskip
\begin{description}
\item[H1 drawing] 
$S(t)$ lies to the left of $S(s)$, the bottom side of $S(s)$ lies above the bottom side of $S(t)$, and the drawing of $G^-_\mu$ in $\Gamma_\mu$ lies to the right of~$S(t)$, below the top side of $S(t)$, above the bottom side of $S(t)$, and to the left of the right side of $S(s)$.  
\item[H1$^\diamond$ drawing] $S(t)$ lies to the left of $S(s)$, the bottom side of $S(s)$ lies below the bottom side of $S(t)$, and the drawing of $G^-_\mu$ in $\Gamma_\mu$ lies to the right of~$S(t)$, below the top side of $S(t)$, above the top side of $S(s)$, and to the left of the right side of $S(s)$.  
\item[V1 drawing] 
$S(t)$ lies above $S(s)$ and the drawing of $G^-_\mu$ in $\Gamma_\mu$ lies above~$S(s)$, below the top side of $S(t)$, to the right of the left side of $S(s)$, and to the left of the \mbox{right side of~$S(s)$.} 
\item[D1 drawing] 
$S(t)$ lies above $S(s)$ and the left side of $S(t)$ lies to the left of the left side of $S(s)$, and the drawing of $G^-_\mu$ in $\Gamma_\mu$ lies to the right of the left side of~$S(t)$, below the top side of $S(t)$, above the bottom side of $S(s)$, and to the left of the right side of $S(s)$. 
\item[D1$^\diamond$ drawing] 
$\Gamma_\mu$ is a D1 drawing of $G_\mu$ in which the drawing of $G^-_\mu$ lies \mbox{to the right of~$S(t)$.}
\end{description}

We now present two lemmata for the possible canonical drawings of each P-node $\mu$ in $T$. Recall that, by \autoref{obs:no-forbidden}, we can assume that $\mu$ is not a forbidden P-node. Let $\mu_1,\dots,\mu_k$ be the S-node children of $\mu$.
The general strategy in the proofs of both lemmata consists of 
\begin{inparaenum}
\item computing appropriate valid drawings $\Gamma_{\mu_1},\dots,\Gamma_{\mu_k}$ for the pertinent graphs $G_{\mu_1},\dots,G_{\mu_k}$ of $\mu_1,\dots,\mu_k$, respectively,
\item 
modifying the square-contact representation of $G^-_{\mu_i}$ contained in $\Gamma_{\mu_i}$, for $i=1,\dots,k$, by means of affine transformations, so that 
 representations derived from S-nodes of the same type lie in the interior of the same polygon, and finally 
\item composing the resulting drawings into \mbox{a canonical drawing of~$G_\mu$; refer 
\lipicsarxiv{to~\cite{paperarxiv}.}{to~\autoref{apx:proofs}\red{.1}.}}
\end{inparaenum}

We first consider the case in which $\mu$ does not contain an edge between its terminals. In this case, by \red{Lemmata}~\ref{le:A}, \ref{le:B}, and~\ref{le:C}, we can assume that $\Gamma_{\mu_i}$ is an L-shape drawing, if $\mu_i$ is of \texttt{Type A}, and a pipe drawing, if $\mu_i$ is of \texttt{Type B} or of \texttt{Type C}, for $i=1,\dots,k$.

\begin{restatable}{lemmax}{LemmaEDGE}\label{le:no-intra-terminals-edge}
Let $\mu$ be a P-node in $T$ with terminals $s$ and $t$ such that $st \notin E(G_\mu)$. Then, graph $G_\mu$ admits an H0 drawing and a V0 drawing.
\end{restatable}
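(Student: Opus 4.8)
The plan is to construct the H0 drawing explicitly and to obtain the V0 drawing by the analogous, essentially $90^\circ$-rotated, construction. Since $st \notin E(G_\mu)$, \autoref{le:A}, \autoref{le:B}, and \autoref{le:C} supply, for each S-node child $\mu_i$ of $\mu$, a valid drawing $\Gamma_{\mu_i}$ that is an L-shape drawing when $\mu_i$ is of \texttt{Type A} and a pipe drawing when $\mu_i$ is of \texttt{Type B} or of \texttt{Type C}. I would first normalize these by type. Using \autoref{prop:adjacency}, I may apply \NPAR(s,\hatsymbol)- and \NPAR(t,\hatsymbol)-scalings, with $\hatsymbol \in \{\UL{},\UR{},\DR{},\DL{}\}$, to each $\Gamma^-_{\mu_i}$ without losing any adjacency to $s$ or $t$, and so rescale every child to a common terminal-square size. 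I then stack all pipe drawings (from the \texttt{Type B} and \texttt{Type C} children) vertically into a single horizontal pipe $P$, and arrange all L-shape drawings (from the \texttt{Type A} children) so that together they occupy a single L-shape region $L$, nesting each successive internal drawing $\Gamma^-_{\mu_i}$ in the empty corner left free by the previous ones, so that same-type representations lie in the interior of a common polygon.

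With the two aggregate regions $P$ and $L$ in hand, the second step is to compose them around a single pair of terminal squares. I would place $S(t)$ to the left and $S(s)$ to the right, choosing $S(s)$ large enough that its bottom side lies below the bottom side of $S(t)$, exactly as the H0 definition requires. The horizontal pipe $P$ is inserted as a strip whose left and right sides abut the right side of $S(t)$ and the left side of $S(s)$, respectively, while the L-shape $L$ is inserted so that its vertical arm abuts $S(t)$ and its horizontal arm abuts the part of $S(s)$ extending below $S(t)$, filling the region underneath $P$; see \autoref{fig:H}. A final rescaling of the two terminal squares then aligns all boundary contacts. Interior-disjointness is immediate because the children occupy pairwise-disjoint strips or nested empty regions; the internal adjacencies are inherited verbatim from the drawings $\Gamma_{\mu_i}$; there is no $st$ contact since $S(t)$ and $S(s)$ are separated by $P$ and $L$; and every required contact with $S(s)$ or $S(t)$ is produced precisely by \autoref{prop:adjacency}, applied to the boundary squares that become collinear with, and separated from, the corresponding terminal side.

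I expect the geometric packing to be the main obstacle: one must simultaneously nest the L-shapes and stack the pipes between a \emph{single} $S(s)$ and a \emph{single} $S(t)$ so that the regions stay interior-disjoint, the empty notch of each L-shape is respected, and --- after the terminal rescaling --- the collinearity-and-separation hypotheses of \autoref{prop:adjacency} hold for every boundary square, so that $S(s)$ and $S(t)$ acquire exactly the intended neighbors and no spurious corner-only or side contacts are created. Once this packing is carried out for H0, the V0 drawing follows by the same argument with the roles of the horizontal and vertical directions interchanged, equivalently by a $90^\circ$ rotation of the H0 construction, which maps the H0 position constraints on $S(s)$, $S(t)$, and the drawing of $G^-_\mu$ to the V0 ones.
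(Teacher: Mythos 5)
Your overall strategy coincides with the paper's: take L-shape drawings for the \texttt{Type A} children and pipe drawings for the \texttt{Type B}/\texttt{Type C} children (from Lemmata~\ref{le:A}, \ref{le:B}, \ref{le:C}), merge same-type drawings (pipes stacked into one rectangle, L-shapes nested into one L-shaped region), and compose these around $S(t)$ and $S(s)$. However, your H0 placement is geometrically impossible as described. The pipe squares that must contact $S(s)$ do so through the \emph{left} side of $S(s)$, so the pipe stack necessarily lies below the top side of $S(s)$; the L-shape squares that must contact $S(s)$ do so through their bottom sides resting on the \emph{top} side of $S(s)$ (this is exactly what the rectangular/L-shape drawing definitions guarantee), so the nested L-shapes must lie \emph{above} the top side of $S(s)$ --- in the corner bounded by the right side of $S(t)$ and the top side of $S(s)$ --- hence above the pipes, not ``underneath $P$'' abutting ``the part of $S(s)$ extending below $S(t)$,'' where there is no top side of $S(s)$ for them to rest on. This is how the paper arranges \autoref{fig:H}: the pipes span the band between the two terminals, and the L-shapes sit on top of $S(s)$, hugging $S(t)$.

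The more serious gap is the claim that the V0 drawing is obtained ``equivalently by a $90^\circ$ rotation of the H0 construction.'' No rigid motion maps the H0 constraints to the V0 constraints: any quarter turn (with or without a reflection) that places $S(t)$ above $S(s)$ sends the L-shape squares that rested on the top side of $S(s)$ to squares touching a \emph{vertical} side of $S(s)$, violating the V0 requirement that $G^-_\mu$ lie above $S(s)$ and to the right of its left side. The paper's V0 construction is genuinely asymmetric between the two types of children: the pipe drawings are re-oriented to run vertically and are placed side by side between $S(t)$ (above) and $S(s)$ (below), while the L-shape drawings \emph{keep} their orientation and are still inserted with their left sides on the right side of $S(t)$ and their bottom sides on the top side of $S(s)$. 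Getting the exact V0 constraints right is not pedantry: \autoref{le:A} consumes V0 drawings in precisely this position when chaining the children of an S-node, so a ``V0 by rotation'' drawing would break the later compositions.
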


\newcommand{\proofoflemmaEDGE}{
In order to obtain an H0 drawing $\Gamma_{H0}$ of $G_{\mu}$ (a V0 drawing $\Gamma_{V0}$ of $G_{\mu}$) we compose the drawings $\Gamma_{\mu_i}$ of $G_{\mu_i}$, with $1 \leq i \leq k$, as depicted in \autoref{fig:H}\red{(H0)} (in \autoref{fig:H}\red{(V0)}). Specifically, we proceed as follows.  
First, we scale all the pipe drawings of the children of $\mu$ of \texttt{Type B} and of \texttt{Type C} in such a way that they have the same width $W$ (the same height $H$). Then, we compose these drawings in such a way for them to fit in a rectangle $R_W$ of width $W$ (a rectangle $R_H$ of height $H$) and such that no two pipes overlap; let $\Gamma_P$ be the resulting drawing.
Similarly, we scale all the L-shape drawings of the children of $\mu$ of \texttt{Type A} in such a way that they fit into a rectangle $R^*$, the left side of each L-shape is incident to the left side of $R^*$, the bottom side of each L-shape is incident to the bottom side of $R^*$, and no two L-shapes overlap; let $\Gamma_L$ be the resulting drawing. 
Then, we draw $u$ and $v$ as squares $S(u)$ and $S(v)$ of appropriate size in $\Gamma_{H0}$ (in $\Gamma_{V0}$) in such a way that $S(t)$ is incident to the left side of $R_W$ (top side of $R_H$), $S(s)$ is incident to the right side of $R_W$ (bottom side of $R_H$). Finally, we insert a scaled copy of $\Gamma_L$ in $\Gamma_{H0}$ (in $\Gamma_{V0}$) such that the left side of each L-shape contained in $\Gamma_L$ is adjacent to the right side of $S(t)$ and the bottom side of each L-shape contained in $\Gamma_L$ is adjacent to the top side of $S(s)$. This concludes the construction of $\Gamma_{H0}$ and of $\Gamma_{V0}$. \xspace{}
%
}


Then, we consider the case in which $\mu$ contains an edge between its terminals. Recall that, by \autoref{obs:edge-implications}, node $\mu$ has no child of \texttt{Type B} and at most two children of \texttt{Type C}.
In particular, node $\mu$ has two children of \texttt{Type C}, if it is bad, and one child of \texttt{Type C}, if it is almost bad.
In this case, by \red{Lemmata}~\ref{le:A} and~\ref{le:C}, we can assume that $\Gamma_{\mu_i}$ is an L-shape drawing, if $\mu_i$ is of \texttt{Type A}, and a rectangular drawing, if $\mu_i$ is of \texttt{Type C}, for $i=1,\dots,k$.

\begin{restatable}{lemmax}{LemmaNOEDGE}\label{le:intra-terminals-edge}
Let $\mu$ be a P-node in $T$ with terminals $s$ and $t$ such that $st \in E(G_\mu)$. Then, graph $G_\mu$ admits 
\begin{itemize}
\item an H1 drawing, a V1 drawing, and a D1 drawing, if $\mu$ is bad, or
\item an H1$^\diamond$ drawing and a D1$^\diamond$ drawing, if $\mu$ is good or almost bad. 
\end{itemize}
\end{restatable}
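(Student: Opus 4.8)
The plan is to prove the lemma by an explicit geometric construction, splitting into the \emph{bad} case and the \emph{good-or-almost-bad} case, since Observation~\ref{obs:edge-implications} pins down exactly which children occur. In both cases $\mu$ has no child of \texttt{Type B}; moreover it has exactly two children of \texttt{Type C} when bad, exactly one when almost bad, and none when good, and every remaining child is of \texttt{Type A}. By Lemmata~\ref{le:A} and~\ref{le:C} I may assume that each \texttt{Type A} child is supplied with an L-shape drawing and each \texttt{Type C} child with a rectangular drawing. The edge $st$ is realized by drawing $S(s)$ and $S(t)$ as two adjacent squares. The construction then amounts to packing the child drawings into the prescribed region around this fixed pair while identifying every source square with $S(s)$ and every target square with $S(t)$; concretely, each child drawing is scaled and translated so that the side of $S(s_i)$ (of $S(t_i)$) along which its inner squares attach becomes collinear with a fixed side of the global $S(s)$ (of $S(t)$), after which Property~\ref{prop:adjacency} guarantees that precisely the edges incident to $s$ and $t$ are realized.

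I would first handle the good/almost-bad case, where all children but at most one are of \texttt{Type A}, producing the H1$^\diamond$ and D1$^\diamond$ drawings. I scale the L-shape drawings so that they nest along the corner determined by the right side of $S(t)$ and the top side of $S(s)$, exactly as in the H0/V0 construction of Lemma~\ref{le:no-intra-terminals-edge}, and I place the single \texttt{Type C} rectangular drawing (present only when $\mu$ is almost bad) in the remaining rectangular slot. Fixing $S(t)$ to the left of $S(s)$ with the bottom of $S(s)$ below the bottom of $S(t)$ yields the horizontal variant H1$^\diamond$, while repositioning $S(t)$ above and to the left of $S(s)$ yields the diagonal variant D1$^\diamond$. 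Throughout, the \NPAR(s,\hatsymbol)- and \NPAR(t,\hatsymbol)-scalings used to resize the children neither create nor destroy adjacencies with $s$ and $t$, again by Property~\ref{prop:adjacency}.

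The bad case is the crux and the main obstacle. Now there are two \texttt{Type C} children, each drawn as a rectangular drawing whose target square lies above and to the left of its source square. To obtain all three of H1, V1, and D1 I would place the two rectangular sub-drawings on opposite sides of the $S(s)$--$S(t)$ pair—one attached along a side of $S(t)$ and one along a side of $S(s)$—and fill the interstitial space with the nested L-shapes of the \texttt{Type A} children, as in \autoref{fig:H}. The difficulty is to choose the scalings (and, where necessary, reflections) so that the attaching sides of \emph{both} source squares become collinear with one fixed side of $S(s)$ and those of \emph{both} target squares become collinear with one fixed side of $S(t)$, while keeping all inner drawings $G^-_{\mu_i}$ interior-disjoint and respecting the exact boundary inequalities defining each type (the horizontal band between the top and bottom sides of $S(t)$ for H1, the vertical band between the left and right sides of $S(s)$ for V1, and the diagonal offset of $S(t)$ relative to $S(s)$ for D1). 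Verifying that, after these transformations, the collinearity-and-separation hypotheses of Property~\ref{prop:adjacency} continue to hold—so that no spurious adjacency to $s$ or $t$ is introduced and none is lost—is the step demanding the most care; the remaining checks of planarity and interior-disjointness are routine once the placement is fixed.
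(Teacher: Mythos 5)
Your proposal follows essentially the same route as the paper's proof: the same case analysis driven by Observation~\ref{obs:edge-implications} (two, one, or zero \texttt{Type C} children according to whether $\mu$ is bad, almost bad, or good), the same child drawings supplied by Lemmata~\ref{le:A} and~\ref{le:C}, the same corner-nesting of the \texttt{Type A} L-shapes next to $S(s)$ and $S(t)$, and the same opposite-side placement of the two \texttt{Type C} rectangular drawings with reflections, all justified via Property~\ref{prop:adjacency}. The step you flag as the crux is settled in the paper exactly as you anticipate: one \texttt{Type C} inner drawing shares the corner region with the L-shapes, while the other is mirrored across the $x$-axis and placed in the band below $S(s)$ (touching the bottom of $S(s)$ and the right side of $S(t)$) for H1 (symmetrically for V1), and is mirrored across the $y$-axis and rotated by $-90^\circ$ to occupy the region below $S(t)$ and to the left of $S(s)$ for D1, so that the sides along which its squares attach face $S(s)$ and $S(t)$ after the transformation.
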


\newcommand{\proofoflemmaNOEDGE}{
We define rectangle $R^*$ and the drawing $\Gamma_L$ enclosing the drawing of $G^-_{\mu_i}$ in the L-shape drawing $\Gamma_{\mu_i}$ of each child $\mu_i$ of $\mu$ of \texttt{Type A} exactly as in the proof of
\autoref{le:no-intra-terminals-edge}.

We prove the first part of the statement.
Let $\mu_1$ and $\mu_2$ be the two S-node children of $\mu$ of \texttt{Type C} and let $\Gamma_{\mu_1}$ and $\Gamma_{\mu_1}$ be rectangular drawings of $G_{\mu_1}$ and $G_{\mu_2}$, respectively.
We show how to construct an H1 drawing $\Gamma_{H1}$ of $G_\mu$. The construction of a V1 drawing being symmetric. Refer to \red{Figs.}~\ref{fig:H}\red{(H1)} and~\ref{fig:H}\red{(V1)}.
First, we draw $u$ and $v$ as squares $S(u)$ and $S(v)$ of appropriate size such that 
$S(t)$ lies to the left of $S(s)$, the bottom side of $S(s)$ lies above the bottom side of $S(t)$, the top side of $S(s)$ lies below the top side of $S(t)$, and the right side of $S(t)$ is adjacent to the left side of $S(s)$; let $\Gamma_H$ be the resulting drawing.
Then, we insert a scaled copy of $\Gamma_L$ in $\Gamma_H$ such that the left side of each L-shape contained in $\Gamma_L$ is adjacent to the right side of $S(t)$ and the bottom side of each L-shape contained in $\Gamma_L$ is adjacent to the top side of $S(s)$.
We obtain $\Gamma_{H1}$ from $\Gamma_H$ as follows. First, we insert a scaled copy $\Gamma'$ of  $G^-_{\mu_1}$ in $\Gamma_{\mu_1}$ in the interior of $\Gamma_L$ in such a way that the left side and the bottom side of the rectangle enclosing $\Gamma'$ is adjacent to the right side of $S(t)$ and to the top side of $S(s)$, respectively, and $\Gamma'$ does not overlap with any square in $\Gamma_L$.
Finally,
consider the drawing $\Gamma''$
of $G^-_{\mu_2}$ in $\Gamma_{\mu_2}$ after being mirrored with respect to the $x$-axis. 
We insert a scaled copy of $\Gamma''$ in $\Gamma_H$ in such a way that the top side and the left side of the rectangle enclosing $\Gamma''$ is adjacent to the right side of $S(s)$ and to the bottom side of $S(t)$, respectively. Observe that the flip of the drawing of $G^-_{\mu_2}$ with respect to the $x$-axis has been performed to preserve adjacencies in the final drawing, as we are placing $\Gamma''$ below $S(s)$.
We now show how to construct a D1 drawing $\Gamma_{D1}$ of $G_\mu$. Refer to \autoref{fig:H}\red{(D1)}.
First, we draw $u$ and $v$ as squares $S(u)$ and $S(v)$ of appropriate size such that 
$S(t)$ lies above $S(s)$, the left side of $S(t)$ lies to the left of the left side of $S(s)$, the right side of $S(t)$ lies between the left and the right side of $S(s)$, and the bottom side of $S(t)$ is adjacent to the top side of $S(s)$; let $\Gamma_D$ be the resulting drawing.
Then, we extend $\Gamma_D$ by inserting in it a scaled copy of $\Gamma_L$ and a scaled copy $\Gamma'$ of  $G^-_{\mu_1}$ in $\Gamma_{\mu_1}$ as discussed above for constructing an H1 drawing. Observe that $\Gamma_D$ is now a D1$^\diamond$ drawing.
Finally,
consider the drawing $\Gamma^*$
of $G^-_{\mu_2}$ in $\Gamma_{\mu_2}$ after being mirrored with respect to the $y$-axis and rotated by $-90^\circ$. We insert a scaled copy of $\Gamma^*$ in $\Gamma_D$ in such a way that the top side and the right side of the rectangle enclosing $\Gamma^*$ is adjacent to the bottom side of $S(t)$ and to the left side of $S(s)$, respectively. Observe that 
the flip of the drawing of $G^-_{\mu_2}$ with respect to the $y$-axis and the counter-clockwise rotation by $90^\circ$ have been performed to preserve adjacencies in the final drawing, as we are placing $\Gamma^*$ below $S(t)$ and to the left of $S(s)$.

Now, we prove the second part of the statement. Observe that, since $\mu$ is good or almost bad, it has at most one S-node child of \texttt{Type C}, as S-nodes of \texttt{Type C} are critical.
For the construction of an H1$^\diamond$ drawing and of a D1$^\diamond$ drawing we proceed as discussed above for the construction of $\Gamma_{H1}$ and of $\Gamma_{D1}$, respectively. However, in this case, we can simply omit the last step in these constructions, in which we extend $\Gamma_{H1}$ and $\Gamma_{D1}$ with drawings $\Gamma''$ and $\Gamma^*$, respectively. As observed above, the construction of $\Gamma_{D1}$ immediately yields a  D1$^\diamond$ drawing, if $\mu$ is good or almost bad. 
Refer to \autoref{fig:H}\red{(D1$^\diamond$)}.
Instead, in order to obtain an H1$^\diamond$ drawing from $\Gamma_{H1}$, we only need to perform a final negative \NUR(t)-scaling of $\Gamma_{H1}$ so that the bottom side of $S(t)$ lies above the bottom side of $S(s)$. Refer to \autoref{fig:H}\red{(H1$^\diamond$)}. \mbox{This concludes the proof.} 
}


\medskip

We finally turn our attention to the valid drawings of the S-nodes in $T$.
Let $\mu$ be an S-node in $T$ and let $\mu_1,\dots,\mu_k$ be the children of $\mu$ (where the virtual edge $e_i$, corresponding to node $\mu_i$, precedes the virtual edge $e_{i+1}$, corresponding to node $\mu_{i+1}$, from $t$ to $s$ in $\skel_\mu$). 
The next three lemmata immediately imply \autoref{th:construction}. To simplify their proofs, we assume that each child of $\mu$ is a P-node. In fact, the case in which a child of $\mu$ is a Q-node can be treated analogously to that of a P-node containing an edge between its terminals.
The general strategy in the proofs of all three lemmata consists of 
\begin{inparaenum}
  \item computing appropriate canonical drawings
  $\Gamma_{\mu_1},\dots,\Gamma_{\mu_k}$ for the pertinent graphs $G_{\mu_1},\dots,G_{\mu_k}$ of $\mu_1,\dots,\mu_k$, respectively,
\item modifying these drawings, by means of affine transformations, so that the squares corresponding to terminals shared by different children of $\mu$ can be identified without introducing any overlapping between squares corresponding to internal vertices of $G_{\mu_i}$ and $G_{\mu_j}$, with $i \neq j$, and finally
\item composing the resulting drawings into a valid drawing of $G_\mu$.
\end{inparaenum}

\begin{figure}[t!]
   \centering
   \subcaptionbox{\label{fig:typeA-a}}
  {\includegraphics[height=.23\columnwidth,page=1]{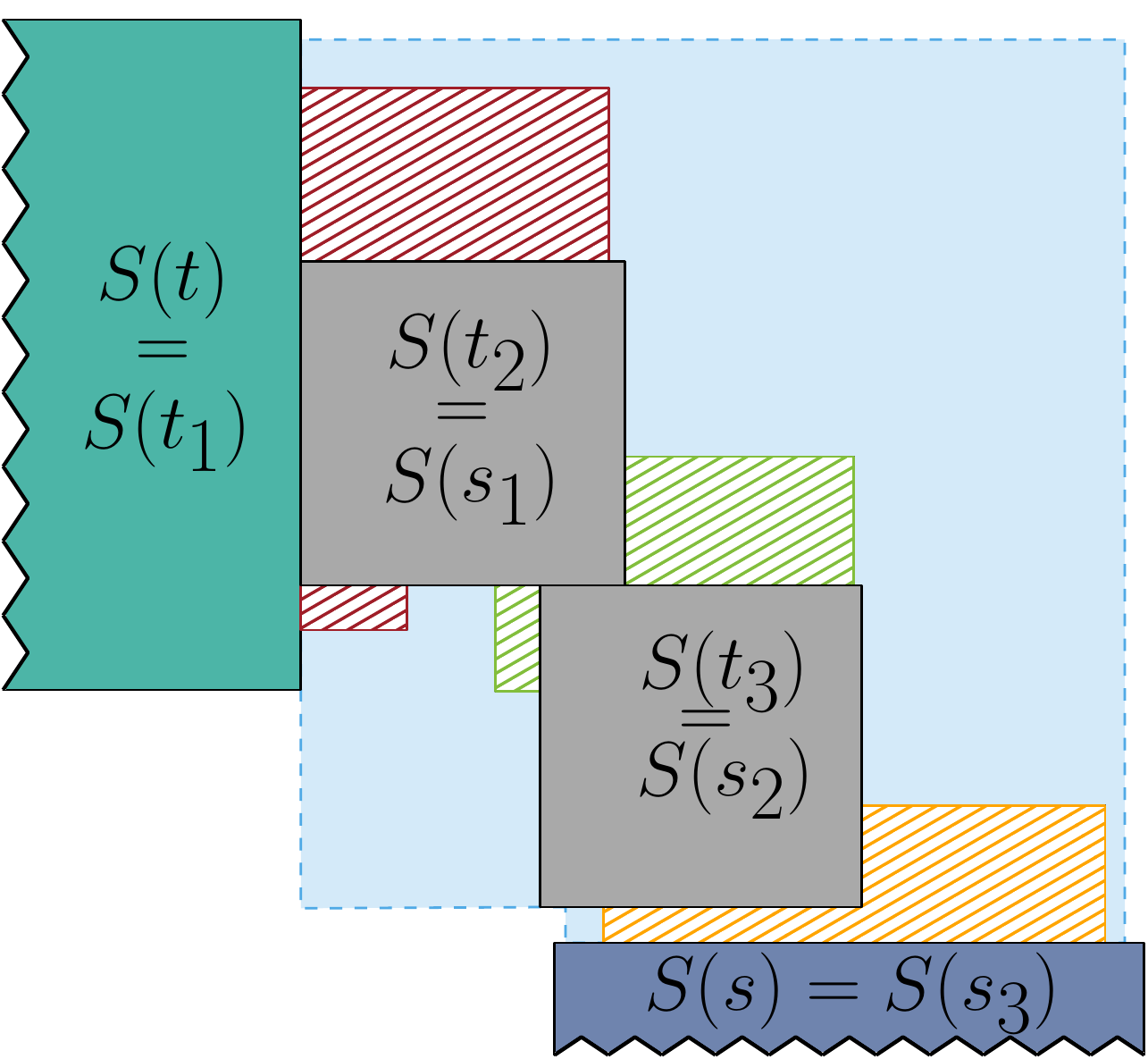}}
  \hfil
  \subcaptionbox{\label{fig:typeA-b}}
  {\includegraphics[height=.23\columnwidth,page=2]{typeABC}}
    \hfil
  \subcaptionbox{\label{fig:typeA-e}}
  {\includegraphics[height=.23\columnwidth,page=5]{typeABC}}
 \\
 \vspace{-14mm}
  \subcaptionbox{\label{fig:typeA-c}}
  {\includegraphics[height=.23\columnwidth,page=8]{typeABC}}
  \hfil
  \subcaptionbox{\label{fig:typeA-d}}
  {\includegraphics[height=.23\columnwidth,page=7]{typeABC}}
  \caption{
Illustrations for the proofs of \red{Lemmata}~\ref{le:A}, \ref{le:B}, and~\ref{le:C}. Striped polygons of the same color enclose different parts of the drawing of each graph $G^-_{\mu_i}$ (contained in the canonical drawing $\Gamma_{\mu_i}$ of $G_{\mu_i}$).
  (a) An H1 drawing of $G_{\mu_1}$, a D1 drawing of $G_{\mu_2}$, and a D1$^\diamond$ drawing of $G_{\mu_3}$ are combined into an L-shape drawing. 
(b) An H0 drawing of $G_{\mu_1}$ and a D1$^\diamond$ drawing of $G_{\mu_2}$ are combined into an L-shape drawing. 
(c) An H1$^\diamond$ drawing of $G_{\mu_1}$ and a D1$^\diamond$ drawing of $G_{\mu_2}$ are combined into a rectangular drawing. 
(d) An H1 drawing of $G_{\mu_1}$ and a D1$^\diamond$ drawing of $G_{\mu_2}$ are combined into a pipe drawing. 
(e) An H1$^\diamond$ drawing of $G_{\mu_1}$ and a D1$^\diamond$ drawing of $G_{\mu_2}$ are combined into a pipe drawing.
    }
\label{fig:typeABC}
\end{figure}

\begin{lemmax}\label{le:A}
If $\mu$ is an S-node of \texttt{Type A}, then $G_\mu$ admits an L-shape drawing.
\end{lemmax}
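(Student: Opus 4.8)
The plan is to realize the L-shape drawing of $G_\mu$ as a descending staircase whose steps are the canonical drawings of the children $\mu_1,\dots,\mu_k$ of $\mu$, taken in the order in which their virtual edges appear from $t$ to $s$ along $\skel_\mu$; refer to \autoref{fig:typeABC}. Since by assumption each child $\mu_i$ is a P-node, \autoref{le:no-intra-terminals-edge} and \autoref{le:intra-terminals-edge} supply exactly the canonical drawings I need: if $s_it_i\in E(G_{\mu_i})$, a horizontal drawing (H1 or H1$^\diamond$) and a diagonal drawing (D1 or D1$^\diamond$); if $s_it_i\notin E(G_{\mu_i})$, a horizontal drawing (H0) and a vertical drawing (V0). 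I would draw the first child $\mu_1$ with its horizontal drawing, so that $S(t)=S(t_1)$ occupies the upper-left corner and the representation of $G^-_{\mu_1}$ forms the top bar of the final drawing, extending rightward. Each subsequent child $\mu_2,\dots,\mu_k$ is drawn with its diagonal (or vertical) drawing and placed one step lower and to the right, so that the chain of terminal squares $S(t)=S(t_1),\ S(s_1)=S(t_2),\ \dots,\ S(s_{k-1})=S(t_k),\ S(s_k)=S(s)$ descends from the upper-left corner down to $S(s)$, while drifting rightward. For the last child $\mu_k$ I would prefer the tight variant (a D1$^\diamond$ or V0 drawing), whose interior stays to the right of $S(t_k)$, so as to keep the lower-left corner free.

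The core of the argument is the identification of the shared terminal squares. For every pair of consecutive children, the square $S(s_i)$ of $\Gamma_{\mu_i}$ and the square $S(t_{i+1})$ of $\Gamma_{\mu_{i+1}}$ must be made to coincide. I would process the children from $\mu_1$ to $\mu_k$ and, at step $i$, apply a uniform scaling and a translation to the whole drawing $\Gamma_{\mu_{i+1}}$ so that $S(t_{i+1})$ matches in size and position the already-placed square $S(s_i)$; being a similarity, this preserves the square-contact structure of $\Gamma_{\mu_{i+1}}$. The staircase orientation guarantees that the interior of $\mu_i$ meets the shared square from above and to the left, whereas the interior of $\mu_{i+1}$ meets it from below and to the right, so the two sets of contacts live on disjoint sides of the merged square and are realized simultaneously. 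Moreover, \autoref{prop:adjacency} together with the scaling transformations it enables lets me shrink the representation of each $G^-_{\mu_i}$ into an arbitrarily small neighborhood of its step while preserving all adjacencies; this keeps the interiors of distinct children in pairwise-disjoint steps, so that no two squares coming from different children overlap.

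It then remains to check the three defining conditions. All contacts with $S(s)$ and $S(t)$, as well as all contacts across each shared terminal square, are inherited from the children's canonical drawings and are preserved by the scalings, and no spurious contact is created because distinct steps are separated; in particular \autoref{prop:adjacency} holds for the composed drawing. The placement puts $S(t)$ to the left of and above $S(s)$ and confines all of $\Gamma^-_\mu$ to the right of $S(t)$ and above $S(s)$, so $\Gamma_\mu$ is a rectangular drawing. Finally, taking $R_\emptyset$ to be the rectangle whose lower-left corner is the intersection of the vertical line through the right side of $S(t)$ with the horizontal line through the top side of $S(s)$, whose top side is the line through the bottom of the top bar $G^-_{\mu_1}$, and whose right side is the line through the left side of the top terminal $S(t_2)$ of the first descending child, one checks that $R_\emptyset$ contains no square: the top bar lies above it, and, since the descending part only drifts rightward as it goes down, over the height range of $R_\emptyset$ the interiors of the descending children never reach left of $S(t_2)$. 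Hence $\Gamma_\mu$ is an L-shape drawing.

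The main obstacle I expect is precisely this identification step combined with the exhibition of $R_\emptyset$: after rescaling each child to match its predecessor's shared terminal I must verify that the contacts on that square are simultaneously correct for both children, and that the rightward drift of the staircase never eats into the empty corner. This is where \autoref{prop:adjacency} and the freedom to make the steps arbitrarily small do the real work. It is worth carrying out the case analysis of which canonical variant each child receives (H0/H1/H1$^\diamond$ for $\mu_1$, and V0/D1/D1$^\diamond$ for the rest) to confirm that the collinearity-and-separation hypotheses of \autoref{prop:adjacency} hold at every shared square; to handle the last child, whose tight variant D1$^\diamond$ is unavailable when it is a bad P-node, I would rely on the fact that the accumulated rightward offset of the staircase already confines its interior to the right of $S(t)$, so that a plain D1 drawing suffices there as well.
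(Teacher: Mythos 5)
Your proposal is correct and follows essentially the same route as the paper's proof: the paper likewise gives the first child a horizontal canonical drawing (H0/H1/H1$^\diamond$) and the remaining children vertical or diagonal ones (V0/V1/D1/D1$^\diamond$), shrinks each $G^-_{\mu_i}$ to be arbitrarily small relative to the terminal squares, glues consecutive drawings by identifying $S(s_i)$ with $S(t_{i+1})$ via similarities, and concludes that the resulting staircase leaves the required empty rectangle at the corner. The only immaterial difference is in the choice of variants for bad children beyond the second position, where the paper prescribes V1 drawings while you argue that D1 drawings are harmless because the staircase's rightward drift keeps their interiors away from the corner region.
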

\begin{proof}
We first describe how to select a valid drawing of $\Gamma_{\mu_i}$ of $G_{\mu_i}$, for $i=1,\dots,k$, based on whether (i) $\length(\mu) > 2$ or (ii) $\length(\mu) = 2$. Recall that, if $\length(\mu) = 2$, then at least one child of $\mu$ does not contain an edge between its terminals, say $\mu_1$ (the case in which $s_1t_1 \in E(G_{\mu_1})$ and $s_2t_2 \notin E(G_{\mu_2})$ is analogous).
\begin{enumerate}[(i)]
\item 
By \autoref{le:no-intra-terminals-edge} and \autoref{le:intra-terminals-edge}, we can construct a drawing $\Gamma_{\mu_i}$, for each $\mu_i$, such that:
\begin{inparaenum}
\item 
$\Gamma_{\mu_1}$ is an H0 drawing, if $s_1t_1 \notin E(G_{\mu_1})$, and $\Gamma_{\mu_1}$ is an H1 drawing (H1$^\diamond$ drawing), if $\mu_1$ is bad (if $\mu_1$ is good or almost bad);
\item 
$\Gamma_{\mu_2}$ is a V0 drawing, if $s_2t_2 \notin E(G_{\mu_2})$, and $\Gamma_{\mu_2}$ is a D1 drawing (D1$^\diamond$ drawing), if $\mu_2$ is bad (if $\mu_2$ is good or almost bad); and
\item 
$\Gamma_{\mu_i}$ is a V0 drawing, if $s_it_i \notin E(G_{\mu_i})$, and $\Gamma_{\mu_i}$ is  a V1 drawing (D1$^\diamond$ drawing), if $\mu_i$ is bad (if $\mu_i$ is good or almost bad), for every $i>2$. 
\end{inparaenum}
\item 
By \autoref{le:no-intra-terminals-edge} and \autoref{le:intra-terminals-edge}, we can construct an H0 drawing $\Gamma_{\mu_1}$ of $G_{\mu_1}$ and a V1 drawing (D1$^\diamond$ drawing) $\Gamma_{\mu_2}$ of~$G_{\mu_2}$, if $\mu_2$ is bad (if $\mu_2$ is good or almost bad). 
\end{enumerate} 

We show how to compose all such drawings into an L-shape drawing $\Gamma_\mu$ of $G_\mu$ as follows. Refer to \autoref{fig:typeABC}\red{(a)} for an example of how to compose drawings $\Gamma_{\mu_i}$, with $i=1,\dots,k$, in case (i) and to \autoref{fig:typeABC}\red{(b)} for an example of how to compose drawings $\Gamma_{\mu_1}$ and $\Gamma_{\mu_2}$ in case (ii).
First, we scale $S(s_i)$ and $S(t_i)$ in $\Gamma_{\mu_i}$ so that the bounding box of the drawing of each connected component of $G_{\mu_i} - \{s_i,t_i\}$ in $\Gamma_{\mu_i}$, for $i=1,\dots,k$, becomes arbitrarily small with respect to the drawing of $S(s_i)$ and $S(t_i)$. This avoids overlapping between internal vertices of $G_{\mu_i}$ and $G_{\mu_j}$, with $i \neq j$, in the next phases of the construction. Then, we scale and translate each drawing $\Gamma_{\mu_i}$ so that $S(t_{i+1})=S(s_{i})$, with $i<k$. 
It is easy to see that, by the choice of the canonical drawings of each $G_{\mu_i}$, there exists a rectangular region in $\Gamma_\mu$ whose interior does not intersect any square representing a vertex in $G^-_\mu$ and whose lower-left corner lies at the intersection point between the vertical line passing through the right side of $S(t)$ and the horizontal line passing through the top side of \mbox{$S(s)$ in $\Gamma_\mu$.} 
\end{proof}

The proof of the next two lemmata also exploits rotations of drawings $\Gamma_{\mu_i}$ and can be carried out in a fashion similar to the proof of \autoref{le:A}. Refer
\lipicsarxiv{to~\cite{paperarxiv}.}{to~\autoref{apx:proofs}\red{.1}.}

\begin{restatable}{lemmax}{LemmaB}\label{le:B}
If $\mu$ is an S-node of \texttt{Type B}, then $G_\mu$ admits a pipe drawing.
\end{restatable}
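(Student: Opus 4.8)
The plan is to imitate the construction of \autoref{le:A}, but to compose the two children of $\mu$ horizontally into a pipe rather than bending them into an L-shape. Since $\mu$ is of \texttt{Type B}, we have $\length(\mu)=2$; write $\skel_\mu = s$--$x$--$t$ with children $\mu_1$ (terminals $t,x$) and $\mu_2$ (terminals $x,s$), where both children contain an edge between their terminals and at least one of them is a bad P-node. By the left--right symmetry of the pipe shape we may assume the bad child is $\mu_1$, the child incident to $t$; the case where only $\mu_2$ is bad is symmetric. First I would invoke \autoref{le:intra-terminals-edge} to fix canonical drawings of the children: an H1 drawing $\Gamma_{\mu_1}$ for the bad child $\mu_1$, and for $\mu_2$ a D1$^\diamond$ drawing if $\mu_2$ is good or almost bad (this is the situation illustrated in \autoref{fig:typeABC}\red{(d)}), or a D1 drawing if $\mu_2$ happens to be bad as well, which is handled identically.

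Next, exactly as in the proof of \autoref{le:A}, I would scale $S(s_i)$ and $S(t_i)$ inside each $\Gamma_{\mu_i}$ so that the bounding box of every connected component of $G_{\mu_i}-\{s_i,t_i\}$ becomes arbitrarily small relative to the terminal squares; this prevents any overlap between internal vertices of $G_{\mu_1}$ and $G_{\mu_2}$ once the two drawings are glued. I would then compose the children along the shared terminal $x$. The H1 drawing of $\mu_1$ already places $S(t)$ to the left of $S(x)$ with $G^-_{\mu_1}$ confined to a horizontal strip between them. To continue the pipe rightward, I would reorient the (diagonal) D1$^\diamond$ drawing of $\mu_2$ by an appropriate rotation and reflection so that its terminal square $S(x)$ can be identified with the $S(x)$ coming from $\mu_1$ while its other terminal $S(s)$ is placed at the far right, with $G^-_{\mu_2}$ lying between $S(x)$ and $S(s)$. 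The freedom granted by \autoref{prop:adjacency} lets me apply the required \NPAR(s,\hatsymbol)- and \NPAR(t,\hatsymbol)-scalings, together with a uniform scaling of one child so the two copies of $S(x)$ match in size, without destroying any adjacency. The outcome places $S(t)$ at the far left, $S(s)$ at the far right, and all of $G^-_\mu$ inside a single horizontal rectangle spanning from $S(t)$ to $S(s)$.

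Finally I would verify the three defining conditions of a pipe drawing: that $S(t)$ lies to the left of $S(s)$; that $\Gamma^-_\mu$ lies to the right of $S(t)$ and to the left of $S(s)$; and that every square of $\Gamma^-_\mu$ whose left (right) side is collinear with the right side of $S(t)$ (the left side of $S(s)$) is adjacent to $S(t)$ ($S(s)$). The first two are immediate from the horizontal composition, and the third follows from \autoref{prop:adjacency} together with the collinearity/adjacency structure inherited from the H1 and D1$^\diamond$ drawings. I expect the reorientation step to be the main obstacle: because a D1$^\diamond$ drawing is diagonal by construction and an H1 drawing places its two terminals at different vertical levels, the rotation and reflection applied to $\mu_2$ must be chosen precisely so as to (i) align the shared terminal square $S(x)$ with the one produced by $\mu_1$, reconciling the vertical offsets of the two terminals at the gluing interface, (ii) keep $S(s)$ at the correct right end, and (iii) preserve the collinearity-and-adjacency conditions demanded by the pipe definition at both interfaces.
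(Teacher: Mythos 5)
Your main case --- exactly one bad child --- is essentially the paper's own proof: an H1 drawing for the bad child $\mu_1$, a D1$^\diamond$ drawing for $\mu_2$, a $90^\circ$ reorientation of $\Gamma_{\mu_2}$, shrinking the connected components of $G_{\mu_i}-\{s_i,t_i\}$, identification of the two copies of $S(x)$, and a final scaling at $S(s)$ (the paper uses a single rotation by $-90^\circ$, no reflection, followed by an \NDR(s)-scaling). So far, so good.

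The gap is in the sub-case where \emph{both} children are bad, which you dismiss with ``a D1 drawing \dots which is handled identically.'' It is not: the paper deliberately takes a \emph{V1} drawing of $\mu_2$ there, and the distinction is essential. A bad P-node admits D1 but not D1$^\diamond$ precisely because the interior of its second \texttt{Type C} child must occupy the concave corner below $S(t_2)$ and to the left of $S(s_2)$ (see the proof of \autoref{le:intra-terminals-edge}). Under any $90^\circ$ reorientation that puts $S(t_2)=S(x)$ on the left and $S(s_2)=S(s)$ on the right, that corner is carried into the region directly below (or, with a reflection, directly above) $S(s)$, inside the horizontal extent of $S(s)$ and touching its bottom (top) side. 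The result is still a correct square-contact representation of $G_\mu$, but it is \emph{not} a pipe drawing: the definition requires all of $\Gamma^-_\mu$ to lie to the right of $S(t)$ and to the left of $S(s)$, i.e., strictly between the two vertical lines through the right side of $S(t)$ and the left side of $S(s)$, and none of the scalings licensed by \autoref{prop:adjacency} can move squares from below $S(s)$ to its left while keeping them in contact with $S(s)$. This confinement is not pedantic: pipe drawings are later stacked inside the rectangle of an H0 or V0 drawing in \autoref{le:no-intra-terminals-edge}, and squares protruding past the line through the left side of $S(s)$ would collide with the parent's terminal square or with other pipes. A V1 drawing avoids all of this because it confines $G^-_{\mu_2}$ within the horizontal extent of $S(s_2)$, which after the rotation becomes exactly the horizontal strip of the pipe between $S(x)$ and $S(s)$; your proof needs this replacement to go through.
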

\vspace{-3.5mm}
\begin{restatable}{lemmax}{LemmaC}\label{le:C}
If $\mu$ is an S-node of \texttt{Type C}, then $G_\mu$ admits a pipe and a rectangular drawing.
\end{restatable}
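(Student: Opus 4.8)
The plan is to mirror the composition argument in the proof of \autoref{le:A}, using the canonical drawings of the two children guaranteed by \autoref{le:intra-terminals-edge}. Since $\mu$ is of \texttt{Type C}, we have $\length(\mu)=2$, both children $\mu_1,\mu_2$ contain the edge between their terminals, and neither is bad; treating each child as a P-node (as agreed, the Q-node case is analogous), each is therefore good or almost bad and, by \autoref{le:intra-terminals-edge}, admits both an H1$^\diamond$ drawing and a D1$^\diamond$ drawing. I would label the skeleton path as $t$--$x$--$s$, so that $\mu_1$ joins $t$ and $x$, $\mu_2$ joins $x$ and $s$, and $x \in V(G^-_\mu)$ is the square that the two child drawings must share.

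For the rectangular drawing I would take an H1$^\diamond$ drawing $\Gamma_{\mu_1}$ of $G_{\mu_1}$, with its two terminal squares playing the roles of $S(t)$ (left) and $S(x)$ (lower-right), together with a D1$^\diamond$ drawing $\Gamma_{\mu_2}$ of $G_{\mu_2}$, with its terminals playing the roles of $S(x)$ (upper) and $S(s)$ (lower); see \autoref{fig:typeABC}\red{(c)}. As in the proof of \autoref{le:A}, I would first scale the terminal squares of each $\Gamma_{\mu_i}$ so that the bounding box of each connected component of $G^-_{\mu_i}$ becomes arbitrarily small relative to them, which later prevents any square of $G^-_{\mu_1}$ from overlapping a square of $G^-_{\mu_2}$. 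I would then scale and translate $\Gamma_{\mu_1}$ and $\Gamma_{\mu_2}$ so that their two copies of $S(x)$ coincide. The two pieces then stack along a diagonal from $S(t)$ at the upper-left to $S(s)$ at the lower-right, with $G^-_\mu$ (including $S(x)$) contained in the rectangle lying to the right of $S(t)$ and above $S(s)$, as required.

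For the pipe drawing I would use the same two canonical drawings but lay them out horizontally; see \autoref{fig:typeABC}\red{(e)}. The H1$^\diamond$ drawing of $G_{\mu_1}$ already places $S(t)$ to the left of $S(x)$; I would rotate the D1$^\diamond$ drawing of $G_{\mu_2}$ by $90^\circ$ so that its originally vertical $S(x)$-above-$S(s)$ arrangement becomes $S(x)$ to the left of $S(s)$ with $G^-_{\mu_2}$ between them. After the same small-scaling of internal components and the identification of the two copies of $S(x)$, the result has $S(t)$ to the left of $S(s)$ with $G^-_\mu$ enclosed in the horizontal pipe spanned by the right side of $S(t)$ and the left side of $S(s)$.

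In both constructions the collinearity--adjacency conditions in the definitions of rectangular and pipe drawings---that every square of $G^-_\mu$ whose appropriate side is collinear with a side of $S(s)$ or $S(t)$ actually touches that terminal---are inherited from the corresponding conditions inside $\Gamma_{\mu_1}$ and $\Gamma_{\mu_2}$ via \autoref{prop:adjacency}. I expect the main obstacle to be precisely this geometric bookkeeping: choosing the rotation and the scaling factors so that the two copies of $S(x)$ can be made to coincide exactly, so that the internal drawings of the two children stay disjoint, and so that the required touchings along the sides of $S(s)$ and $S(t)$ are all realized. Because \autoref{prop:adjacency} lets me rescale the terminal squares without creating or destroying adjacencies, these touchings can always be enforced, which is what makes the argument go through.
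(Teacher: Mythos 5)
Your proposal follows the paper's proof essentially step for step: the paper likewise takes an H1$^\diamond$ drawing of $G_{\mu_1}$ and a D1$^\diamond$ drawing of $G_{\mu_2}$ (available by \autoref{le:intra-terminals-edge} since neither child is bad), shrinks the interior components, and identifies $S(s_1)$ with $S(t_2)$, leaving the D1$^\diamond$ drawing unrotated for the rectangular composition and rotating it by a quarter turn for the pipe composition (these steps are borrowed from the proof of \autoref{le:B}). The one point you relegate to generic ``geometric bookkeeping'' is stated explicitly in the paper: after the pipe composition the square $S(x)$ of the shared terminal extends below the bottom sides of both $S(t)$ and $S(s)$, so the paper finishes with a \NDL(t)-scaling and an \NDR(s)-scaling pushing the bottoms of the two terminal squares below that of $S(x)$ --- precisely the kind of terminal rescaling, harmless by \autoref{prop:adjacency}, that you invoke.
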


\newcommand{\proofofLemmaB}{
Recall that $\length(\mu)=2$, at least one of the children of $\mu$ is a bad P-node, say $\mu_1$, and the other child contains an edge between its terminals. The case in which $\mu_2$ is bad and $\mu_1$ is not bad and contains an edge between its terminals can be treated symmetrically.

By \autoref{le:intra-terminals-edge}, we can construct a drawing
$\Gamma_{\mu_1}$ of $G_{\mu_1}$ and a drawing $\Gamma_{\mu_2}$  of $G_{\mu_2}$ such that 
$\Gamma_{\mu_1}$ is an H1 drawing and
$\Gamma_{\mu_2}$ is a V1 drawing, if $\mu_2$ is also bad, or a D1$^\diamond$ drawing, if $\mu_2$ is good or almost bad.

We show how to compose $\Gamma_{\mu_1}$ and $\Gamma_{\mu_2}$ into a pipe drawing of $G_\mu$ as follows. Refer to \autoref{fig:typeABC}\red{(d)} for an example where $\Gamma_{\mu_1}$ is an H1 drawing and $\Gamma_{\mu_2}$ is a D1$^\diamond$ drawing. 
First, we replace $\Gamma_{\mu_2}$ with its copy rotated by $-90^\circ$.
Second, we scale $S(s_i)$ and $S(t_i)$ in $\Gamma_{\mu_i}$ so that the bounding box of the drawing of each connected component of $G_{\mu_i} - \{s_i,t_i\}$ in $\Gamma_{\mu_i}$, with $i \in \{1,2\}$, becomes arbitrarily small with respect to the drawing of $S(s_i)$ and $S(t_i)$. Third, we scale and translate drawing $\Gamma_{\mu_1}$ and $\Gamma_{\mu_2}$ so that $S(t_{2})=S(s_{1})$. Finally, we perform an \NDR(s)-scaling of the obtained drawing of $G_\mu$ so that the drawing of $G^-_\mu$ lies above the bottom side of $S(s)$.
By construction and by the choice of the canonical drawings of $G_{\mu_1}$ and $G_{\mu_2}$, the resulting drawing of $G_\mu$ is a pipe drawing. This concludes the proof.\xspace
}
%
%
\newcommand{\proofofLemmaC}{
Recall that $\length(\mu)=2$, $s_1t_1 \in E(G_{\mu_1})$, $s_2t_2 \in E(G_{\mu_2})$, and none of \mbox{$\mu_1$ or $\mu_2$ is~bad.}

By \autoref{le:intra-terminals-edge}, we can construct a drawing $\Gamma_{\mu_1}$ of $G_{\mu_1}$ and a drawing $\Gamma_{\mu_2}$ of $G_{\mu_2}$ such that
$\Gamma_{\mu_1}$ is an H1$^\diamond$ drawing and
$\Gamma_{\mu_2}$ is a D1$^\diamond$ drawing.

We show how to compose $\Gamma_{\mu_1}$ and $\Gamma_{\mu_2}$ into a pipe drawing $\Gamma_P$ of $G_\mu$; refer to \autoref{fig:typeABC}\red{(e)}.
The first three steps of the construction are exactly as in the proof of \autoref{le:B}. To obtain $\Gamma_P$, we perform an \NDR(s)-scaling and a \NDL(t)-scaling of the obtained drawing of $G_\mu$ so that the bottom side of $S(s)$ and the bottom side of $S(t)$ lie below the bottom side of $S(s_1)=S(t_2)$.

Finally, we show how to compose $\Gamma_{\mu_1}$ and $\Gamma_{\mu_2}$ into a rectangular drawing $\Gamma_R$ of $G_\mu$; refer to \autoref{fig:typeABC}\red{(b)}.  In this case, we obtain $\Gamma_R$ simply by scaling and translating drawing $\Gamma_{\mu_1}$ and $\Gamma_{\mu_2}$ so that $S(t_{2})=S(s_{1})$. 

It is not hard to see that, by construction and by the choice of the canonical drawings of $G_{\mu_1}$ and $G_{\mu_2}$, drawings $\Gamma_P$ and $\Gamma_R$ are a pipe drawing and a rectangular drawing of $G_\mu$, respectively. 
This concludes the proof. \xspace
}

\section{Triconnected Simply-Nested Graphs}\label{se:simply}

In this section, we devote our attention to $3$-connected simply-nested graphs.

A cycle-tree with a single edge removed from the outer cycle is a \emph{path-tree} (to avoid special cases, we allow the outer cycle of the cycle-tree to be a $2$-gon). In path-trees, we refer to vertices in the tree as \emph{tree vertices} and vertices in the external path as \emph{path vertices}. A tree vertex can \emph{see} a path vertex if they share a face in \mbox{the original cycle-tree.}
Define an \emph{almost-triconnected path-tree with root $\rho$, leftmost path vertex $\ell$, and rightmost path vertex $r$} to be a path-tree containing in one of its faces a tree vertex $\rho$ and path vertices $\ell$ and $r$ such that if the edges $\rho\ell$, $\rho r$, and $\ell r$ were added, \mbox{the resulting graph would be a $3$-connected cycle-tree.}

\paragraph{SPQ-decomposition of path-trees.}
We now describe a recursive decomposition for almost-triconnected path-trees. We call this an SPQ-decomposition, because it bears a striking similarity to the SPQ-decomposition of series-parallel graphs. Let $G$ be a $3$-connected cycle-tree, let $\ell r$ be an edge incident to the outer cycle of $G$, and let $\rho$ be a tree vertex incident to the internal face of $G$ edge $\ell r$ is incident to. Also,  let $G'= G - \ell r$ be the almost-triconnected path-tree obtained from $G$ by removing edge $\ell r$. Graph $G'$ defines a rooted decomposition tree $T$ whose nodes are of three different kinds: {\em S-}, {\em P-}, and {\em Q-nodes}.
Each node $\mu$ of $T$ is associated with a path-tree $G_\mu$ with root $\rho_\mu$, leftmost path vertex $\ell_\mu$, and rightmost path vertex $r_\mu$ obtained---except the Q-nodes---from smaller path-trees $T_i$ with root $\rho_i$, leftmost path vertex $\ell_i$, and rightmost \mbox{path vertex $r_i$, for $i=1,\dots,k$, as follows.}

\begin{figure}[b]
\centering
{\includegraphics[scale=0.68,page=2]{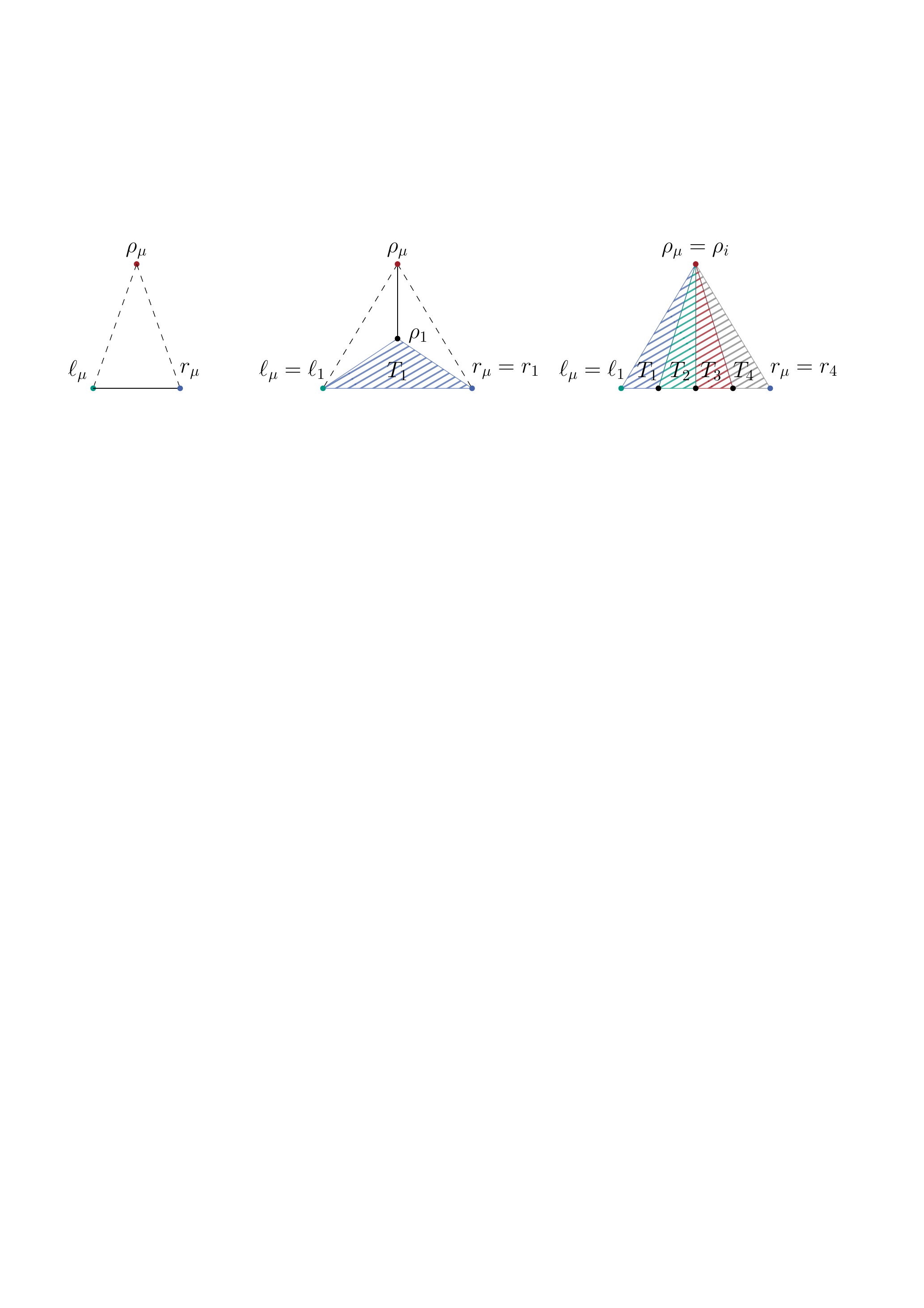}}
\hfil \hspace{5mm}
{\includegraphics[scale=0.68,page=3]{SPQ-path-tree}}
\hfil
{\includegraphics[scale=0.68,page=4]{SPQ-path-tree}}
\caption{Path-trees associated with a Q-node (left), an S-node (middle), and a P-node (right). Dashed edges may or may not exist.}
\label{fig:nodes}
\end{figure}
\begin{itemize}
\item A {\em Q-node} $\mu$ is associated with a path-tree $G_\mu$ with three vertices: one tree vertex $\rho_\mu$ and two path vertices $\ell_\mu$ and $r_\mu$. The tree vertex $\rho_\mu$ is the root of $G_\mu$, while path vertices $\ell_\mu$ and $r_\mu$ are the leftmost and the rightmost path vertex of $G_\mu$, respectively. Edge $\ell_\mu r_\mu$ will always exist, but edges $\rho_\mu\ell_\mu$ and $\rho_\mu r_\mu$ may or may not exist; see \autoref{fig:nodes}\red{(left)}.
\item An {\em S-node} $\mu$ is associated with a path-tree $G_\mu$ obtained from path-tree $T_1$ by adding a new root $\rho_\mu$ connected to $\rho_1$. Also, $\ell_\mu=\ell_1$ and $r_\mu=r_1$ are the leftmost and the rightmost path vertex of~$G_\mu$, respectively. Edges $\rho_\mu \ell_\mu$ and $\rho_\mu r_\mu$ \mbox{may or may not exist; see \autoref{fig:nodes}\red{(midde)}.}

\item A {\em P-node} $\mu$ is associated with a path-tree $G_\mu$ obtained from path-trees $T_i$ by merging $T_1, T_2, \ldots, T_k$ from left to right as follows. First, roots $\rho_i$ are identified into a new root $\rho_\mu$. Then, the rightmost path vertex $r_i$ of $T_i$ and the leftmost path vertex $\ell_{i+1}$ of $T_{i+1}$ are identified, for $i = 1, \ldots, k - 1$. Path vertices $\ell_\mu=\ell_1$ and $r_\mu = r_k$ are the leftmost and the rightmost path vertex of $G_\mu$, respectively; see \autoref{fig:nodes}\red{(right)}.
\end{itemize}

We have the following lemma. 
\begin{restatable}{lemmax}{theorempathTreeDecomp}\label{thm:path-tree-decomp}
Any almost-triconnected path-tree admits an SPQ-decomposition.
\end{restatable}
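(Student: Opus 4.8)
The plan is to prove the statement by induction on the number of vertices of the almost-triconnected path-tree $G$ with root $\rho$, leftmost path vertex $\ell$, and rightmost path vertex $r$. I root the inner tree $\mathcal{T}$ of $G$ at $\rho$ and read off the decomposition from the local structure at $\rho$. Let $\hat{G} = G + \{\rho\ell, \rho r, \ell r\}$ be the $3$-connected cycle-tree witnessing that $G$ is almost-triconnected, and recall that in a $3$-connected cycle-tree every vertex has degree at least $3$, so every interior path vertex is adjacent to at least one tree vertex. I distinguish three cases according to the children of $\rho$ in $\mathcal{T}$ and the separation pairs of $G$ of the form $\{\rho, p\}$ with $p$ an interior path vertex.

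If $\rho$ has no child in $\mathcal{T}$, it is the only tree vertex, so (since every interior path vertex would need a tree neighbour) the external path consists only of $\ell$ and $r$, and $G$ is exactly the path-tree of a Q-node. Otherwise, if $G$ has no separation pair $\{\rho, p\}$ with $p$ interior, I claim $\rho$ has a single child $\rho_1$ whose subtree spans the whole path; deleting $\rho$ then yields a path-tree $T_1$ rooted at $\rho_1$ with the same leftmost and rightmost path vertices, and $G$ is the S-node built from $T_1$. To see that $T_1$ is again almost-triconnected, note that in $\hat{G}$ the vertex $\rho$ has degree exactly $3$, with neighbours $\rho_1, \ell, r$; its completion $\hat{T}_1 = T_1 + \{\rho_1\ell, \rho_1 r, \ell r\}$ is obtained from $\hat{G}$ by deleting $\rho$ and inserting the triangle on $N(\rho)$, i.e.\ by a single $Y$-$\Delta$ transformation, which preserves $3$-connectivity. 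Moreover the triangle $\rho_1\ell r$ bounds a face of $\hat{T}_1$, so $\rho_1$ sees $\ell$ and $r$, and the inner level of $T_1$ is the tree $\mathcal{T} - \rho$. Hence $T_1$ is an almost-triconnected path-tree with one fewer vertex, and the inductive hypothesis applies.

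In the remaining case $G$ has separation pairs $\{\rho, p_{i_1}\}, \dots, \{\rho, p_{i_{k-1}}\}$ with interior path vertices ordered along the path from $\ell$ to $r$. Setting $p_{i_0} = \ell$ and $p_{i_k} = r$, these pairs split $G$ into $k \ge 2$ parts $T_1, \dots, T_k$, where $T_j$ is induced by $\rho$ together with the sub-path from $p_{i_{j-1}}$ to $p_{i_j}$ and all tree vertices lying between the two bounding separation pairs. By planarity and the simply-nested structure, the subtrees of distinct children of $\rho$ occupy disjoint contiguous arcs of the external path, so the parts overlap only in $\rho$ and in the shared boundary vertices, and consecutive parts share exactly the vertex $p_{i_j}$; each $T_j$ is then a path-tree rooted at a copy of $\rho$ with leftmost path vertex $p_{i_{j-1}}$ and rightmost path vertex $p_{i_j}$. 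Thus $G$ is precisely the P-node composition of $T_1, \dots, T_k$, and since $k \ge 2$ each $T_j$ has strictly fewer vertices than $G$.

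The main obstacle is to verify that each part $T_j$ is itself almost-triconnected, i.e.\ that its completion $\hat{T}_j$ is a $3$-connected cycle-tree. This is where planarity and the simply-nested structure must be used carefully: a hypothetical separation pair of $\hat{T}_j$ would either avoid the boundary vertices $\{\rho, p_{i_{j-1}}, p_{i_j}\}$ --- and then be a separation pair of $\hat{G}$, contradicting its $3$-connectivity --- or involve them, which I would rule out using that $\{\rho, p_{i_{j-1}}\}$ and $\{\rho, p_{i_j}\}$ are the extreme separation pairs bounding $T_j$ and that the triangle added in $\hat{T}_j$ renders its boundary rigid. I also need the routine but careful bookkeeping supporting the claims above, namely that the chosen separation pairs are exactly the interior path vertices adjacent to $\rho$ together with the boundaries between the subtrees of distinct children of $\rho$, and that in the S-node case the absence of such pairs forces a single spanning child subtree. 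Once this gluing lemma is in place, the three cases are exhaustive and each reduces $G$ to strictly smaller almost-triconnected path-trees, completing the induction.
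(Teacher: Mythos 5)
Your overall architecture (a case analysis at the root yielding Q-, S-, and P-node cases, with recursion on smaller almost-triconnected path-trees) is essentially the paper's proof, which inducts on tree height and uses ``visibility'' of path vertices from the root where you use separation pairs $\{\rho,p\}$ --- two essentially equivalent notions. However, your first case contains a genuine error. You claim that if $\rho$ has no child in the tree then, ``since every interior path vertex would need a tree neighbour,'' the external path consists only of $\ell$ and $r$ and $G$ is a Q-node. But the tree neighbour of an interior path vertex can be $\rho$ itself. Concretely, take the wheel $W_4$: hub $\rho$, chordless $4$-cycle $\ell,p_1,p_2,r$, spokes from $\rho$ to all four cycle vertices; deleting the edge $\ell r$ gives an almost-triconnected path-tree in which $\rho$ has no child yet there are two interior path vertices. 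This graph is not a Q-node (a Q-node has exactly three vertices); in the intended decomposition it is a P-node obtained by merging three Q-nodes. Because your Case 1 is keyed only on ``$\rho$ has no child'' and takes precedence over the separation-pair cases, such graphs are misclassified and the base of your induction breaks. The fix is exactly what the paper does: when the root is the only tree vertex, $G$ is a Q-node only if it has two path vertices, and otherwise it is a P-node whose $k-1$ children are the Q-nodes spanned by consecutive path vertices. Equivalently, you could test your Case 3 condition first, since here every interior path vertex $p$ gives a separation pair $\{\rho,p\}$.

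Beyond that slip, you leave the claim that each P-node part $T_j$ is again almost-triconnected as an admitted ``main obstacle,'' with only a sketched strategy. For the record, the sketch does go through: the three boundary vertices of $T_j$ are pairwise adjacent in $\hat{T}_j$ and are connected to one another outside $T_j$ in $\hat{G}$ (via the outer cycle and the edges incident to $\rho$), so any $2$-cut of $\hat{T}_j$ having a component disjoint from the boundary transfers to a $2$-cut of $\hat{G}$; and a $2$-cut made of two boundary vertices would force some subtree or subpath of $T_j$ to attach only to those two vertices, again contradicting $3$-connectivity of $\hat{G}$. The paper sidesteps this bookkeeping by defining the P-node children through faces shared with the root, though its own proof is also terse at precisely this point. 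Your Y-$\Delta$ argument in the S-node case is a nice, explicit justification of a step the paper does not spell out at all; once Case 1 is repaired and the gluing lemma written out, your induction on the vertex count is a sound alternative to the paper's induction on height.
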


\newcommand{\proofTheorempathTreeDecomp}{
We induct on the height of the path-tree~$G$.

If the tree in $G$ has height $1$, then the root is the only tree node. If there are only two path vertices, then $G$ is simply a Q-node. Otherwise, it is a P-node. Let the path in $G$ be $p_1, p_2, \ldots, p_k$. Then $G$ can be produced by merging $k - 1$ Q-nodes. The child $T_i$ of the P-node will be the Q-node consisting of root $r$, left path vertex $p_i$, and right path vertex $p_{i + 1}$.

If the tree in $G$ has height greater than $1$, then first suppose that the only path vertices $root$ can see are $\ell$ and $r$. Then $root$ must have exactly one child in $T$, since otherwise there would be at least one visible path vertex between each pair of adjacent children. So $G$ is an S-node whose single child contains $G-root$. We then continue the decomposition recursively.

Otherwise $root$ can see a path vertex $p \neq \ell,r$. Either there is an edge from $root$ to $p$, or there is space to draw such an edge. The vertices $root$ and $p$ divide $G$ into two components. Let $p_{i_1},\dots,p_{i_k}$ be the left to right sequence of path vertices visible from $root$. For each $j$, the vertices $root$, $p_{i_j}$, and $p_{i_{j+1}}$ define a subgraph $G_j$ containing the path vertices inclusively in between $p_{i_j}$ and $p_{i_{j+1}}$ and the tree vertices on paths in the tree from $root$ to those path vertices.  So $G$ is a P-node with children consisting of $G_j$ for $j =1,\dots,k-1$. Each of these children has a smaller height, so we continue the decomposition recursively. \xspace
}

\lipicsarxiv{In~\cite{paperarxiv}}{In~\autoref{apx:proofs}\red{.2}} we show how to construct a square-contact representation of any almost-triconnected path-tree $G$ without separating triangles and whose outer face is not a triangle by inductively maintaining the invariant depicted in \autoref{fig:invariant} for the S- and P-nodes of an SQP-decomposition of $G$. We formalize this result in the next lemma.

\begin{restatable}{lemmax}{LemmaSCRpathtree}\label{lem:path-tree-scr}
Any almost-triconnected path-tree $G$ without separating triangles and whose outer face is not a triangle
admits a square-contact representation.
\end{restatable}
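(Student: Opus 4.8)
\textbf{Proof proposal for \autoref{lem:path-tree-scr}.}
The plan is to mimic the strategy used for series-parallel graphs in \autoref{th:construction}: perform a bottom-up traversal of an SPQ-decomposition $T$ of $G$ (which exists by \autoref{thm:path-tree-decomp}), building a square-contact representation of each $G_\mu$ that satisfies a fixed \emph{invariant}, and then close up the representation at the root by adding the squares and contacts corresponding to the three edges $\rho\ell$, $\rho r$, and $\ell r$ that turn $G$ back into the $3$-connected cycle-tree. The invariant---depicted in the referenced \autoref{fig:invariant}---should record, for each node $\mu$, the relative placement of the squares $S(\rho_\mu)$, $S(\ell_\mu)$, and $S(r_\mu)$ and the fact that the rest of the drawing of $G_\mu$ fits inside a prescribed region delimited by these three squares, with all squares that are collinear with the relevant sides of the boundary squares being adjacent to them. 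As in the series-parallel case, this last adjacency condition is exactly what allows the boundary squares to be freely rescaled by \NPAR(p,\hatsymbol)-scalings when $G_\mu$ is composed into a larger piece without losing or gaining edges.

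First I would handle the base case, the Q-node, whose path-tree is the triangle-free triple $\rho_\mu,\ell_\mu,r_\mu$ with the edge $\ell_\mu r_\mu$ present and the edges $\rho_\mu\ell_\mu$, $\rho_\mu r_\mu$ optionally present; here the invariant is realized directly by placing three squares in the prescribed pattern, omitting a contact precisely when the corresponding optional edge is absent. For the inductive step I would treat the S-node and the P-node separately. An S-node adds a new root $\rho_\mu$ above the old root $\rho_1$: assuming the invariant holds for the child $T_1$, I would shrink the child drawing by a \NPAR-scaling so that it occupies a small region, then place $S(\rho_\mu)$ as a new square touching $S(\rho_1)$ (and touching $S(\ell_\mu)=S(\ell_1)$, $S(r_\mu)=S(r_1)$ iff the optional edges $\rho_\mu\ell_\mu$, $\rho_\mu r_\mu$ exist), re-establishing the invariant for $G_\mu$. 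A P-node identifies the roots of the children $T_1,\dots,T_k$ and glues consecutive ones along the shared path vertex $r_i=\ell_{i+1}$: here I would first rescale each child so that its copy of the shared root-square is the same size and its copies of the two shared path-squares match the neighbors, place the children left to right so that the identified path squares coincide, and finally fuse the $k$ copies of the root into a single square $S(\rho_\mu)$ spanning all of them. \autoref{prop:adjacency}, or rather the path-tree analogue of the collinearity-implies-adjacency condition built into the invariant, guarantees that no spurious contacts are created between internal squares of distinct children and that all intended contacts with the merged root survive.

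The main obstacle I expect is the P-node merge, and specifically two coupled geometric requirements. The first is making the $k$ child root-squares line up into a single axis-aligned square after the path vertices have been forced to coincide: since each child independently constrains the size and position of its root relative to its path, the rescalings must be chosen simultaneously and consistently across all children, and one must verify that the resulting common region for the merged root is genuinely square and touches all the children along full sides (not just corners), so that the contact representation remains \emph{proper}. The second, intertwined issue is ruling out unwanted adjacencies and overlaps between internal squares of $T_i$ and $T_{i+1}$ near their shared path vertex; the invariant must be phrased strongly enough that each child's interior lies strictly inside its own region except along the designated boundary, so that the only contacts across the glue are the intended ones at the shared squares. The hypotheses that $G$ has no separating triangle and no triangular outer face are what make this possible: they forbid exactly the degenerate configurations in which three squares would have to surround a positive-area region, and I would invoke them to argue that at each composition there is always enough free room---an empty rectangular or L-shaped pocket, as in the series-parallel argument---to accommodate the next piece. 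Finally, at the root, adding the edges $\rho\ell$, $\rho r$, and $\ell r$ is carried out by \NPAR-scalings of the three boundary squares until the required three pairs of sides touch, exactly as the root Q-node edge was added in the proof of \autoref{th:construction}.
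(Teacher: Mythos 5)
The skeleton of your plan---bottom-up induction over the SPQ-decomposition of \autoref{thm:path-tree-decomp}, maintaining a geometric invariant at every node---is exactly the paper's strategy, but two of your steps have genuine problems. The most concrete one is your closing step at the root. The lemma asks only for a representation of the path-tree $G$ itself; the edges $\rho\ell$, $\rho r$, $\ell r$ that would re-create the cycle-tree are not part of $G$, so no ``closing up'' belongs in this proof at all. Worse, the step as you describe it cannot work: under any invariant of the kind you (and the paper) maintain, $S(\ell)$ and $S(r)$ sit at opposite ends of the drawing with \emph{all} interior squares between them, so no sequence of corner scalings can bring $S(\ell)$ and $S(r)$ into side contact; and if all three contacts $\rho\ell$, $\rho r$, $\ell r$ were realized around the interior squares, three properly touching squares would surround a region of positive area, which is impossible. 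This is precisely why the paper does \emph{not} prove \autoref{le:tree} by reinserting the removed outer edge, but instead splits the cycle-tree into two overlapping almost-triconnected path-trees sharing a tree vertex and two path vertices, builds both representations with this lemma, rotates one by $180^{\circ}$, and overlays them. Your analogy with the root Q-node of \autoref{th:construction} does not transfer, because there $s$ and $t$ are adjacent to the free corner region of an L-shape or rectangular drawing, whereas here $\ell$ and $r$ are separated by the entire drawing.

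The second problem is that the P-node merge, which you correctly flag as the crux, is left unresolved, and the ``simultaneous consistent rescaling'' you postulate is exactly where the work lies. The paper's invariant is considerably more specific than the collinearity-implies-adjacency property you import from \autoref{prop:adjacency}: it distinguishes children with exactly two path vertices from those with at least three, gives the former two admissible orientations (the $U$ and $C$ configurations), pins the tops of $S(\ell)$ and $S(r)$ between the bottom of $S(\rho)$ and one third of the height of $S(\rho)$, and specifies which interior squares may touch the lines extending the sides of the three boundary squares. The merge is then performed \emph{pairwise}, in a prioritized order (children in a $C$ configuration are merged first, toward their concavity), the two root squares are replaced by one square sharing its bottom-left corner with one and its bottom-right corner with the other, and the shared path square $S(m)$ is translated downward when the fused root would overlap it; a separate case handles a single $C$-configuration child whose root is not adjacent to $m$. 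Without the one-third rule and these case distinctions, your fused root need not be realizable as a single axis-aligned square in proper (non-corner) contact with every child, and corner contacts can appear at the shared path squares. Finally, note that the no-separating-triangle hypothesis enters at a precise combinatorial point---in an S-node or Q-node with two path vertices, the root cannot be adjacent to both $\ell$ and $r$---rather than as a generic ``enough free room'' principle.
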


\newcommand{\proofLemmaSCRpathtree}{
As we construct the path-tree using the SPQ-decomposition, we maintain a square-contact representation for each node. For a path-tree $H$ with root vertex $root$, leftmost path vertex $\ell$, and rightmost path vertex $r$, the square-contact representation for $H$ that we construct obeys the following invariant.

\textbf{Inductive Invariant.}
There are two main cases we consider: either $H$ has two path vertices or $H$ has more than two path vertices. First if $H$ has only two path vertices, then $root$ may be connected to $\ell$ or $r$, but not both. Here we describe the invariant for when $root$ is connected to $r$. The version where $root$ is adjacent to $\ell$ is symmetric swapping $\ell$ with $r$ and left with right throughout. If neither edge is present in $H$, then we can use either version of the invariant.
\begin{enumerate}
\item $S(\ell)$ appears to the left of $S(root)$ and $S(r)$ appears directly to the right of $S(\ell)$ and is below $S(root)$. The bottom of $S(root)$ contacts the top of $S(r)$.
\item The top of $S(\ell)$ is vertically between the bottom of $S(root)$ and one-third the distance to the top of $S(root)$. The bottom of $S(root)$ is vertically above the bottom border of $S(\ell)$.
\item If $root$ and $r$ are adjacent, then all other squares are inside the bounding box defined by extending the lines of the top of $S(\ell)$, the top of $S(r)$, the left of $S(root)$ and the right of $S(\ell)$. We call this the $S(root),S(\ell)$-bounding box.
\end{enumerate}

\begin{figure}
\centering
\includegraphics[scale=0.8]{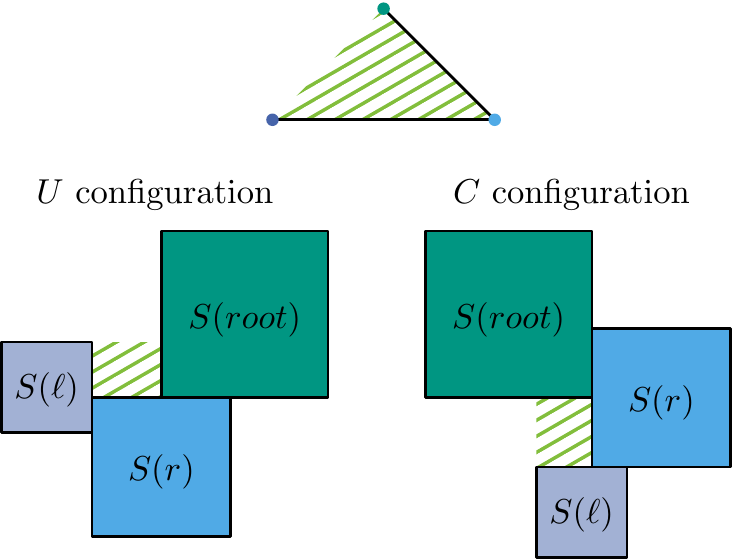}
\caption{The invariant for a cycle-tree with two path vertices.}
\label{fig:invariant-1}
\end{figure}

\begin{figure}
\centering
\includegraphics[scale=0.8]{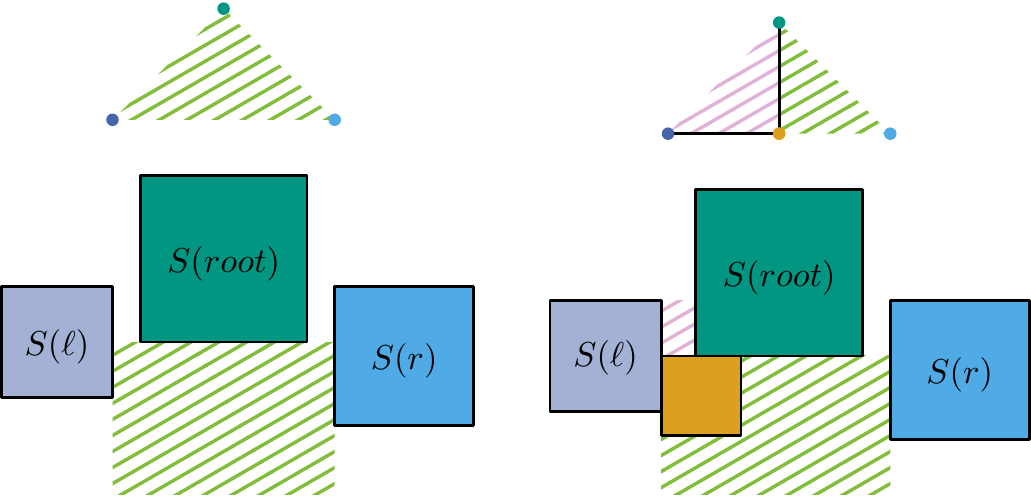}
\caption{The invariant for a cycle-tree with more than two path vertices.}
\label{fig:invariant-2}
\end{figure}

\red{Figure}~\ref{fig:invariant-1} illustrates the two-path-vertex, $root$ adjacent to $r$, case of the invariant. Note that the square-contact representation can be rotated $90^{\circ}$ such that $S(\ell)$ is below $S(root)$ and $S(r)$ appears to the right of $S(root)$ at the top of the square-contact representation. We call the original representation the $U$ orientation and the rotated variant the $C$ orientation for the shape formed by $S(\ell)$, $S(root)$, and $S(r)$. The interior squares are fit into the ``concavity'' of the letter for the orientation.

If $H$ has more than two path vertices, then:
\begin{enumerate}
\item $S(\ell)$ and $S(r)$ have their top on the same horizontal line.
\item $S(\ell)$, $S(root)$, and $S(r)$ appear in that order from left to right with the tops of $S(\ell)$ and $S(r)$ between the bottom of $S(root)$ and one-third the distance to the top of $S(root)$ and the bottoms of $S(\ell)$ and $S(r)$ are below the bottom of $S(root)$.
\item If $\ell$ and $root$ are adjacent, then the right border of $S(\ell)$ touches the left border of $S(root)$. If $\ell$ and $root$ are not connected, then there is a horizontal gap between the right border of $S(\ell)$ and the left border of $S(root)$. This is symmetric for the $r$ and $root$ connection.
\item If $H$ is a P-node and the leftmost (or rightmost) child being merged has only two path vertices, then other squares are allowed inside the $S(root),S(\ell)$-bounding box.
\item All other squares are drawn in the region below $S(root)$, to the right $S(\ell)$, and to the left of $S(r)$.
\item Only interior squares contacting $S(root)$ may touch the horizontal line extending out from the bottom of $S(root)$. Only squares contacting $S(\ell)$ may touch the line extending out from the right of $S(\ell)$. Only squares contacting $S(r)$ may touch the line extending out from the left of $S(r)$.
\end{enumerate}

\red{Figure}~\ref{fig:invariant-2} illustrates the three-path-vertex case of the invariant.
Together the two cases of this invariant guarantees the drawings for S-nodes and P-nodes resembles \autoref{fig:invariant}.


We now show how to construct a square-contact representation $\Gamma$ for a path-tree $G$ by inductively constructing a square-contact representation $\Gamma_x$ that obeys the invariant for each node $x$ in the decomposition. Let $root_x$ be the root vertex of the tree vertices, $\ell_x$ be the leftmost path vertex, and $r_x$ be the rightmost path vertex for a decomposition node $x$. We may drop subscripts when the decomposition node is clear.

\begin{figure}
\centering\includegraphics[scale=0.8]{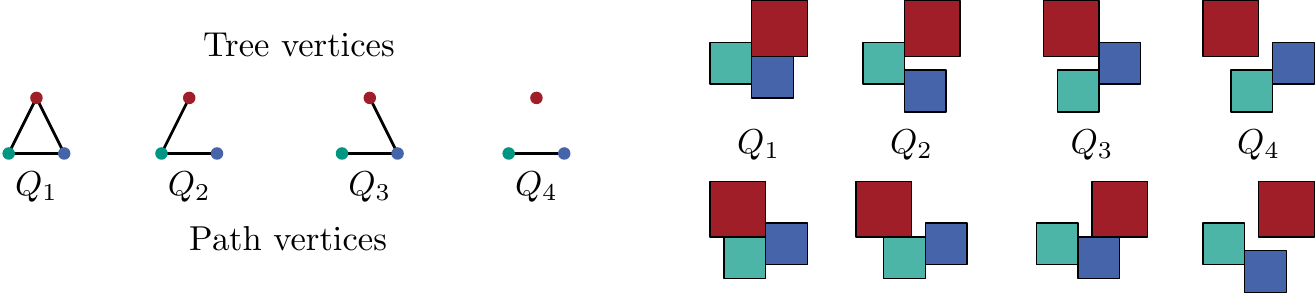}
\caption{The four possible path-trees for a Q-node and 
\mbox{their square-contact representations.}}
\label{fig:qnode}
\end{figure}

First, \red{Figure}~\ref{fig:qnode} shows square-contact representations for each Q-node type obeying the invariant for graphs with two path vertices.

\paragraph{Handling S-nodes}
If $x$ is an S-node that has just $2$ path vertices with child node $c$, then $root_x$ cannot have edges to both $\ell_x$ and $r_x$, otherwise the three vertices form a separating triangle. If $root_x$ is adjacent to $\ell$ choose the orientation of $\Gamma_c$ that places $S_{\Gamma_c}(r)$ below $S_{\Gamma_c}(root_c)$. Then construct $\Gamma_x$ by drawing a square for $x$ of the same size as $S_{\Gamma_c}(root_c)$ directly on top of $S(root_c)$ and perform \NUL(\ell)-scaling until its top is vertically between the bottom of $S(root_x)$ and one-third of the way to the top of $S(root_x)$. $\Gamma_c$ may not have had an adjacency between $root_c$ and $\ell$, in which case we \NUL(root_x)-scale until its left contacts the right of $S(\ell)$. The case when $root_x$ is adjacent to $r$ can be handled symmetrically. Finally when $root_x$ is adjacent to neither $\ell$ nor $r$, we can construct a case where there is a $root_x,\ell$ edge and then perform a slight negative \NUL(root_x)-scaling to remove the extra contact.

If $x$ is an S-node with at least $3$ path vertices and child node $c$, then constructing the square-contact representation is much simpler. $\Gamma_c$obeys the second case of the invariant. Therefore in $\Gamma_c$, $S_{\Gamma_c}(\ell)$ appears on the left of the drawing and $S_(\Gamma_c)(r)$ appears on the right of the drawing with the other squares appearing between them. To construct $\Gamma_x$, start with $\Gamma_c$ and place a new square for $root_x$ on the top of $S(root_c)$ of equal size. If $root_x$ is adjacent to $\ell$, \NUL(root_x)-scale until the left border is vertically in line with the right border of $S(\ell)$. If $root_x$ is adjacent to $r$, \NUR(root_x)-scale until the right border is vertically in line with the left border of $S(r)$. Finally \NUL(\ell)-scale and \NUR(r)-scale until their top borders lie between the bottom border of $S(root_x)$ and one-third of the way to the top of $S(root_x)$. It is possible $root_c$ was adjacent to $\ell$ or $r$ and $root_x$ is not, if so we perform a slight negative \NUL(root_x) or \NUR(root_x)-scaling to remove the contact. 

Every contact in $\Gamma_c$ is preserved in $\Gamma_x$, because every square in $\Gamma_c$ lies below the one-third line of $S_{\Gamma_c}(root_c)$. The only new contacts required in $\Gamma_x$ are those involving $root_x$ which must contact $root_c$ and may contact $\ell$ or $r$. The placement of $S_{\Gamma_x}(root_x)$ guarantees contact with $S_{\Gamma_x}(root_c)$ and the scaling of $S_{\Gamma_x}(root_x)$ introduces contacts with $S_{\Gamma_x}(\ell)$ and $S_{\Gamma_x}(r)$ if needed. By construction $\Gamma_x$ obeys the invariant.

\begin{figure}[t]
\centering
\subcaptionbox{\label{fig:snode}}{
\includegraphics[scale=0.5]{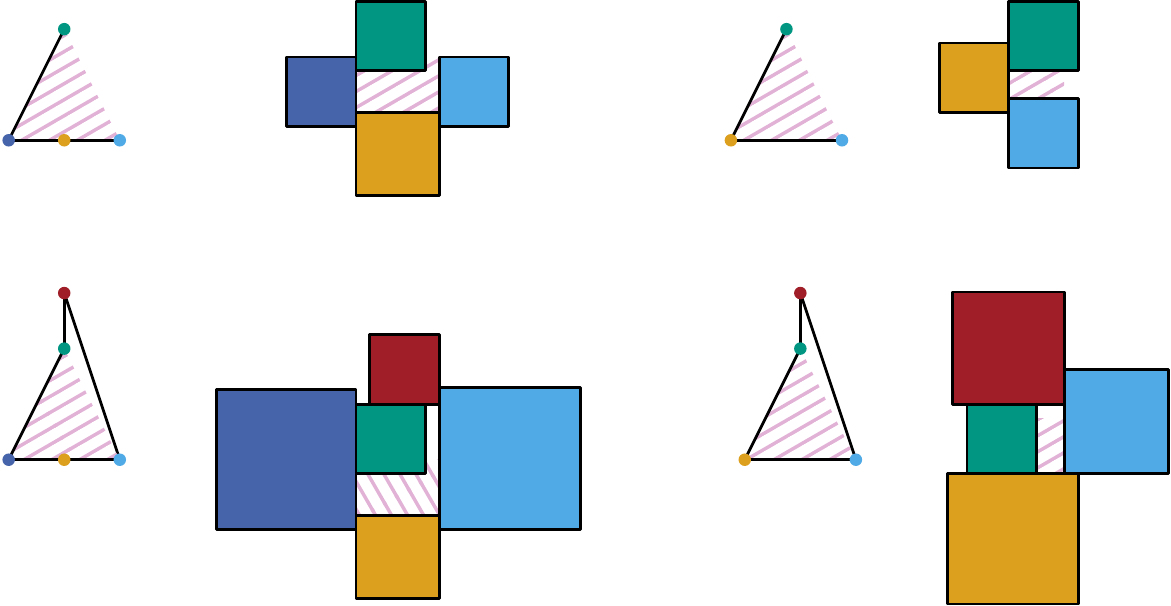}
}
\hfil
\subcaptionbox{\label{fig:pnode}}{
\includegraphics[scale=0.5]{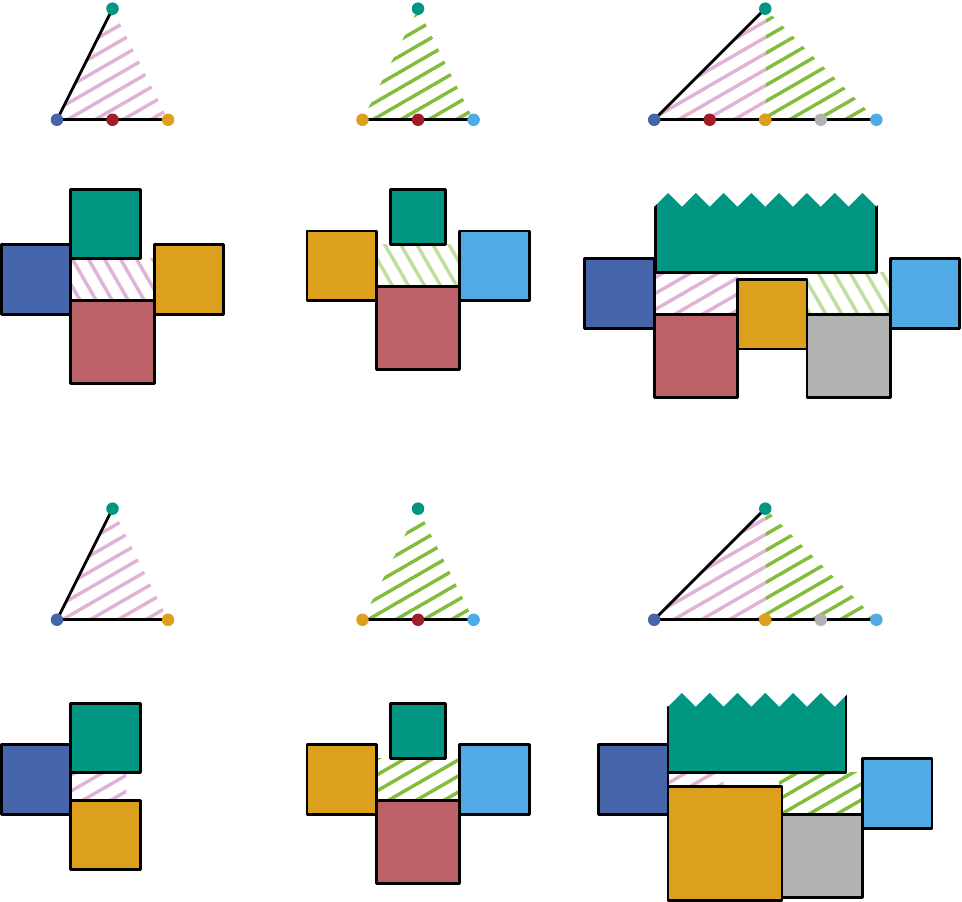}
}
\caption{Examples of the square-contact construction of an S-node (a) and of a P-node (b).}
\label{fig:scr-construction}
\end{figure}

\paragraph{Handling P-nodes}
We handle a P-node $x$ by repeatedly merging the square-contact representations for two adjacent children $c_1$ and $c_2$. For the pair of graphs being merged, call the shared root vertex $root$, the leftmost path vertex $\ell$, the path vertex they share in common $m$, and the rightmost path vertex $r$. We use the $C$ configuration for every child node with just two path vertices, unless it is the leftmost child and has an edge to the right of the two path vertices or it is the rightmost child and has an edge to the left of the two path vertices.

The order we merge children requires some care. For any child in a $C$ configuration, prioritize merging it in the direction of its ``concavity'', that is if $root$ is adjacent to $\ell$ then merge the child with its right sibling. We perform all of these prioritized merges before any others.

After recursively constructing $\Gamma_{c_1}$ and $\Gamma_{c_2}$, scale the two representations such that $S_{\Gamma_{c_1}}(m)$ is the same size as $S_{\Gamma_{c_2}}(m)$. 

If neither or both of $\Gamma_{c_1}$ and $\Gamma_{c_2}$ are $C$ configurations, we perform additional scaling to guarantee the bottom of $S_{\Gamma_{c_1}}(root)$ and $S_{\Gamma_{c_2}}(root)$ are at the same height. To do so, take the side with the higher square bottom for $root$ and perform \NDR(m)-scaling and then rescale down the drawing to keep $S_{\Gamma_{c_1}}(m)$ the same size as $S_{\Gamma_{c_2}}(m)$. Now replace the squares for $root$ with a new square sharing bottom left corners with $S_{\Gamma_{c_1}}(root)$ and bottom right corners with $S_{\Gamma_{c_2}}(root)$. In some cases, this new square overlaps the top interior of $S(m)$. So if it is needed, we translate $S(m)$ downwards either until its top is in line with the bottom of $S(root)$, if $m$ and $root$ are adjacent, or until it is slightly below $S(m)$, if $m$ and $root$ are not adjacent. The resulting square-contact representation has at least three path vertices and obeys the second case of the invariant.

Because of our merge ordering, if there is only one $C$ configuration then $root$ must not be adjacent to $m$. If we attempted to perform the above procedure with such a $C$ configuration, then the result would not obey our invariant because the square of an outer path vertices would be below $S(root)$.

When only one of $\Gamma_{c_1}$ and $\Gamma_{c_2}$ is a $C$ configuration and $root$ is not adjacent to $m$, after overlaying $S_{\Gamma_{c_1}}(m)$ and $S_{\Gamma_{c_2}}(m)$ it is not easy to replace the $root$ squares with one larger square. Without loss of generality we assume $\Gamma_{c_1}$ is the $C$ configuration. In this case, we observe that because $root$ and $m$ are not adjacent, $S_{\Gamma_{c_2}}(m)$ can be translated downwards to slightly below $S_{\Gamma_{c_2}}(root)$ without disturbing any square contacts. After performing this translation, we can \NDR(m)-scale followed by rescaling down $\Gamma_{c_1}$ to keep $S_{\Gamma_{c_1}}(m)$ and $S_{\Gamma_{c_2}}(m)$ the same size such that the bounding box containing the squares of interior vertices has height equal to the slight vertical distance between $S(m)$ and $S(root)$ in the other drawing. Now we can finish the construction by creating a new large square for $root$ in the same manner as the previous case.

The line $S(m)$ is scaled along in the $P$ node construction is guaranteed by the invariant to only intersect squares that are already in contact with $S(m)$. Similarly the bottom of the new $S(root)$ sits along the line touched only by the tops of squares of interior vertices adjacent to $root$. The translation of $S(m)$ is also designed such that any square previously in contact with it, will still be in contact after the translation. Therefore $\Gamma_x$ is a square-contact representation for $x$. By construction $\Gamma_x$ follows the invariant.

\red{Figure}~\ref{fig:scr-construction} shows our construction for two S-nodes on the left and two P-nodes each with two children on the right.\xspace
}

\begin{wrapfigure}[12]{R}{.35\textwidth}
   \centering
    \ifarxive
\vspace{-8mm}
\else
\vspace{-6mm}
\fi
   \includegraphics[scale=.55]{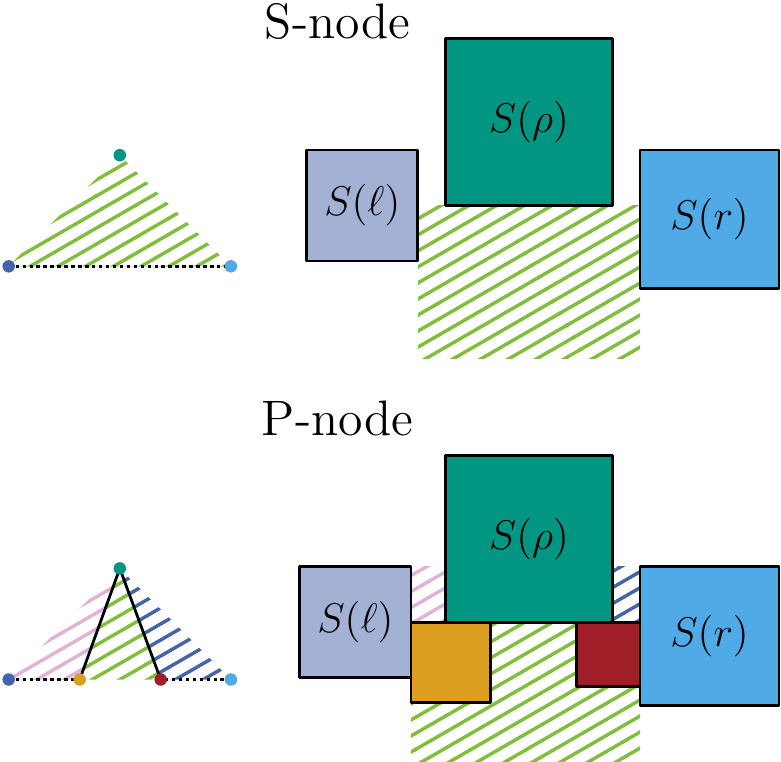}
   \caption{Invariants for S- and P-nodes with more than two path vertices.}
   \label{fig:invariant}
\end{wrapfigure}
To construct a square-contact representation for a $3$-connected cycle-tree, it is natural to remove an edge in the outer cycle to obtain a path-tree, use \autoref{lem:path-tree-scr} to construct a square-contact representation, and then attempt to reintroduce a contact for the removed edge. However, because \autoref{lem:path-tree-scr} places the leftmost and rightmost path vertices on the left and right side of the drawing, it is unclear how to add a contact between them. Instead, we split the cycle-tree into two overlapping almost-triconnected path-trees, obtain their square-contact representations by \autoref{thm:path-tree-decomp}, and overlay them to form a square-contact representation for the entire cycle-tree.

\begin{restatable}{theorem}{TheoremSCRcycletree}\label{le:tree}
Any $3$-connected cycle-tree $G$ without separating triangles and whose outer face is not a triangle admits a square-contact representation.
\end{restatable}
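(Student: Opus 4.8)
I would work with the essentially unique planar embedding of the $3$-connected graph $G$ in which the outer cycle $C$ bounds the outer face and the inner tree $T$ is drawn inside it; by hypothesis this embedding has no separating triangle and its outer face is not a triangle, so the necessary condition for a proper square-contact representation is satisfied. Following the strategy announced just before the statement, the plan is to split $G$ into two \emph{overlapping} almost-triconnected path-trees $G_1$ and $G_2$ whose union is $G$, to draw each of them with \autoref{lem:path-tree-scr}, and to \emph{overlay} the two drawings so that on their shared part the squares coincide while the two edges that are missing from the individual path-trees reappear as proper contacts.

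The reason an overlay can work is the rigid shape of the drawings produced by \autoref{lem:path-tree-scr}. By the inductive invariant (see \autoref{fig:invariant}), a path-tree drawing with root $root$, leftmost path vertex $\ell$, and rightmost path vertex $r$ is shaped like a \emph{bin}: $S(root)$ occupies the top and spans the whole width, $S(\ell)$ and $S(r)$ sit at the lower-left and lower-right corners with their tops pinned within one-third of the height of $S(root)$, and every remaining square lies in the region below $S(root)$ and between $S(\ell)$ and $S(r)$. Crucially, the invariant also certifies that the lines extending from the sides of $S(root)$, $S(\ell)$, and $S(r)$ are touched only by squares genuinely adjacent to these three vertices. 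I would exploit exactly these two features---controlled boundary geometry and the absence of stray contacts along the extended boundary lines---when fusing the two drawings.

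\emph{Splitting and drawing.} I would choose the two path-trees so that they share a substantial common subgraph $H$ (the inner tree together with the two path vertices at which the outer cycle is cut) and so that $G = G_1 \cup G_2$, with each $G_i$ an almost-triconnected path-tree by the hypotheses inherited from $G$ (no separating triangle, outer face not a triangle); the existence of such decompositions is underwritten by \autoref{thm:path-tree-decomp}. Applying \autoref{lem:path-tree-scr} and, where needed, the $90^\circ$ rotations and reflections used throughout the construction, I would place $\Gamma_1$ as an upward-opening bin and $\Gamma_2$ as the vertically reflected bin, align the shared path vertices on the left and right, and then use the scaling transformations from the Preliminaries to rescale the two drawings until every square of the common subgraph $H$ occupies exactly the same position and size in both. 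The two deleted outer edges would then be realized as proper side contacts between the appropriate corner squares of the two bins, closing $C$ into a ring around the drawing of $T$.

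\emph{Main obstacle.} The crux is the overlay itself: I must show that the two drawings, produced independently by \autoref{lem:path-tree-scr}, can be simultaneously transformed so that (i) all shared squares coincide exactly, (ii) neither interior intrudes into the other across the seam, and (iii) both formerly missing cycle edges become infinite (not corner-only) contacts, with no spurious contact created between non-adjacent squares. Because \autoref{lem:path-tree-scr} does not a priori force the common part $H$ to be laid out identically in the two pieces, the heart of the proof is to argue that the invariant leaves just enough rigidity---via the one-third pinning of $S(\ell),S(r)$ and the property that the extended boundary lines are touched only by adjacent squares---to reconcile the two layouts by a finite sequence of scalings, and to verify that the reconstructed ring contains no three mutually touching squares enclosing a nonzero-area region. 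Establishing this compatibility, rather than the existence of the decomposition, is where essentially all the work lies.
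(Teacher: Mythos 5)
Your high-level strategy (split into two almost-triconnected path-trees, draw each with \autoref{lem:path-tree-scr}, overlay) is the one the paper announces, but your choice of decomposition creates a gap that your proposal explicitly leaves open and that, as set up, cannot be closed. You propose that the two path-trees share ``the inner tree together with the two path vertices at which the outer cycle is cut.'' This fails twice. First, such pieces are generally not almost-triconnected path-trees: a tree vertex all of whose path-vertex neighbours lie on the other arc of the outer cycle becomes a low-degree interior vertex in one of the two pieces, so adding the three completion edges does not yield a $3$-connected cycle-tree, and neither \autoref{thm:path-tree-decomp} nor \autoref{lem:path-tree-scr} applies to it. Second, and more fatally, your overlay would need two \emph{independently constructed} drawings of the entire inner tree to be reconciled square-by-square. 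The inductive invariant pins down only the squares of $root$, $\ell$, and $r$ (plus containment of everything else in a region); it says nothing about the internal layout, so no ``finite sequence of scalings'' can make two arbitrary drawings of a large common subgraph coincide. You correctly identify this compatibility as ``where essentially all the work lies,'' but with your decomposition that work cannot be done with the available tools.

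The missing idea is to make the overlap as small as the invariant can actually control: three vertices. The paper walks up from a leaf of the inner tree to find a tree vertex $r$ that sees at least three consecutive path vertices $p_1,p_2,p_3$, and splits $G$ at the separator $\{r,p_1,p_3\}$ into $H_1$ (containing $p_2$) and $H_2$; the two path-trees $G-H_1$ and $G-H_2$ then intersect \emph{exactly} in $\{r,p_1,p_3\}$, with the inner tree partitioned between them rather than duplicated. Both drawings are built with the same root $r$, one is rotated by $180^{\circ}$, and then only the three boundary squares $S(r)$, $S(p_1)$, $S(p_3)$ must be matched --- precisely the squares whose relative geometry (root spanning the top, $p_1$ and $p_3$ at the corners with the one-third pinning) the invariant does fix, so a short sequence of corner scalings aligns them and duplicates are deleted. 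Note also that the hypothesis that the outer face is not a triangle is used concretely there: it guarantees $G-H_1$ still has at least three path vertices, so the correct case of the invariant applies to both pieces; your proposal never uses this hypothesis in the overlay itself.
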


\newcommand{\proofofTheoremSCRcycletree}{
Given a $3$-connected cycle-tree graph $G$ without separating triangles, we split it into two path-trees using the following method:

\begin{enumerate}
\item Root the tree in $G$ arbitrarily.
\item Let $v$ be a leaf vertex of the tree in $G$.
\item If $v$ can see at least three path vertices, select $v$ and a subsequence of three path vertices visible from $v$.
\item Otherwise travel upwards in the tree until reaching a vertex $u$ that can see at least three path vertices
\item There are at most two path vertices connected to descendants of $u$ in the tree. Select $u$ and a subsequence of three path vertices visible from $u$ that include the path vertices connected to the descendants of $u$.
\item Now in either case, we have selected one tree vertex $r$ and three consecutive path vertices which in counter-clockwise order are $p_1$, $p_2$, and $p_3$ visible from $r$. Removing $r$, $p_1$, and $p_3$ disconnects the cycle-tree into two subgraphs $H_1$ and $H_2$ where $H_1$ is the subgraph containing $p_2$.
\item $G-H_1$ and $G-H_2$ are two path-trees that coincide on $r$, $p_1$, and $p_3$.
\end{enumerate}

\red{Figure}~\ref{fig:invariant-1} depicts using this method to obtain two path-trees from a cycle-tree. Because the outer face of $G$ has at least four path vertices, $G-H_2$ has exactly three path vertices, and the two subgraphs share two path vertices, $G-H_1$ has at least three path vertices. The two graphs $G-H_1$ and $G-H_2$ are almost-triconnected path-trees and so by \autoref{thm:path-tree-decomp}, both graphs can be constructed by our decomposition. Then using $r$ as the root of both path-trees, \autoref{lem:path-tree-scr} guarantees $G-H_1$ and $G-H_2$ have square-contact representations $\Gamma_1$ and $\Gamma_2$ respectively satisfying the second case of the invariant.

In particular, $\Gamma_1$ has $S_{\Gamma_1}(r)$ in between $S_{\Gamma_1}(p_3)$ on the left and $S_{\Gamma_1}(p_1)$ on the right while $\Gamma_2$ has $S_{\Gamma_2}(r)$ in between $S_{\Gamma_2}(p_1)$ on the left and $S_{\Gamma_2}(p_3)$ on the right. Our goal is to align the squares for $r$, $p_1$, and $p_3$ in both drawings while not introducing in new square-contacts.

We construct such a square-contact representation $\Gamma=\Gamma(G)$ as follows:
\begin{enumerate}
\item Let $d_i(x,y)$ be the horizontal distance between $S_i(x)$ and $S_i(y)$.
\item Scale the entire drawings of $\Gamma_1$ and $\Gamma_2$ so that $d_1(p_1,p_3) = d_2(p_1,p_3)$.
\item Rotate $\Gamma_2$ by $180^{\circ}$ and vertically align the right sides of the squares $S_{\Gamma_1}(p_1)$ and $S_{\Gamma_2}(p_1)$. Note that the left sides of the squares $S_{\Gamma_1}(p_3)$ and $S_{\Gamma_2}(p_3)$ are also vertically aligned.
\item If $d_1(r,p_1) = d_2(r,p_1)$ and $d_1(r,p_3) = d_2(r,p_3)$, then $S_{\Gamma_1}(r)$ and $S_{\Gamma_2}(r)$ are the same size and we translate $\Gamma_2$ vertically such that $S_{\Gamma_1}(r)$ and $S_{\Gamma_2}(r)$ overlap exactly.
\item If $d_1(r,p_1) \neq d_2(r,p_1)$ and without loss of generality $d_1(r,p_1) < d_2(r,p_1)$, then perform \NDL(r)-scaling in $\Gamma_2$ until the distances are equal. If there were squares in the $S_{\Gamma_2}(r),S_{\Gamma_2}(p_1)$-bounding box, then perform \NUL(a)-scaling on the squares in the bounding box where $a$ is the bottom right corner of the bounding box until the horizontal width of the bounding box is $d_1(r,p_1)$.
\item If $d_1(r,p_3) \neq d_2(r,p_3)$, perform the analogous scaling to make them equal too.
\item Perform \NUL(p_1)-scaling so that its top border is above the top of $S_{\Gamma_2}(p_1)$. Do the analogous scaling for $S_{\Gamma_1}(p_3)$.
\item Remove $S_{\Gamma_2}(r)$, $S_{\Gamma_2}(p_1)$, and $S_{\Gamma_2}(p_3)$.
\end{enumerate}

After the final removal step, $\Gamma$ has one square for each vertex in $G$. We prove for each edge $uv$ in $G-H_1$ that $S_{\Gamma}(u)$ contacts $S_{\Gamma}(v)$ using some case analysis, but first observe that rotations and scalings of entire square-contact representations preserve square contacts so we only need consider cases where $u$ or $v$ was scaled. The only squares undergo scaling in $\Gamma_1$ or $\Gamma_2$ are the squares for $r$, $p_1$, $p_3$, and any vertices in the $S_i(r),S_i(p_j)$-bounding boxes. 

\begin{itemize}
\item Without loss of generality if $u$ is $p_1$ (or $p_3$) then $S_{\Gamma_1}(v)$ touches the right side of $S_{\Gamma_1}(p_1)$. When $S(p_1)$ is scaled, the resulting right border is an extension of the original. The left border of $S(v)$ contains $S_{\Gamma_1}(v)$, because any scaling $S_{\Gamma_1}(v)$ underwent (possibly because $v=r$ or $v$ is in the $S_{\Gamma_1}(r),S_{\Gamma_1}(p_1)$-bounding box) holds at least one point in contact with $S_{\Gamma_1}(p_1)$. 
\item If $u$ is $r$ and $v$ is not $p_1$ or $p_3$, then $v$ may be below $r$ or in the $S_{\Gamma_1}(r),S_{\Gamma_1}(p_1)$-bounding box. If $v$ is below $r$, then the contact is also preserved, because whenever $S(r)$ is scaled the bottom is a superset of the previous bottom.
\end{itemize}

This argument is nearly symmetric for an edge in $G-H_2$. The only difference to observe is that the squares $S_{\Gamma_2}(p_1)$ or $S_{\Gamma_2}(p_3)$ are removed in favor of $S_{\Gamma_1}(p_1)$ and $S_{\Gamma_1}(p_3)$. Other squares only contact $S_{\Gamma_2}(p_1)$ on the right and $S_{\Gamma_2}(p_3)$ on the left. After scaling $S_{\Gamma_1}(p_1)$ ($S_{\Gamma_1}(p_3)$), the right (left) border $S_{\Gamma_1}(p_1)$ ($S_{\Gamma_1}(p_3)$) contains the right (left) border of $S_{\Gamma_2}(p_1)$ ($S_{\Gamma_2}(p_3)$).
Therefore every edge in $G$ has proper contact in $\Gamma$.

We also observe that no new contacts were introduced by these steps. Because for $i=1,2$ and $j=1,3$ the top border of the $S_i(t),S_i(p_j)$-bounding boxes are below the one-third line on $S_i(t)$, the overlaying of $S_{\Gamma_1}(t)$ with $S_{\Gamma_2}(t)$ does not introduce any new contacts (other than overlapping squares for the same vertex).

Thus $\Gamma$ is a proper square-contact representation of $G$.\xspace
}


As Halin graphs are $3$-connected cycle-trees without separating triangles and have, except for $K_4$, a non-triangular outer face, we have the following.

\begin{corollary}\label{co:halin}
Any Halin graph $G \not\simeq K_4$ admits a square-contact representation.
\end{corollary}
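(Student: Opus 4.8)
The plan is to derive the corollary directly from \autoref{le:tree} by verifying that every Halin graph $G \not\simeq K_4$ satisfies the three hypotheses of that theorem: being a $3$-connected cycle-tree, having no separating triangle, and having a non-triangular outer face. First I would recall the construction of a Halin graph from a planar tree $T$ with no degree-$2$ vertices, whose leaves are joined into a cycle respecting the embedding order. The leaf cycle bounds the outer face and forms the first level, while the internal tree vertices form a tree that is the innermost level; thus a Halin graph is exactly a $2$-outerplanar simply-nested graph whose inner level is a tree, i.e., a cycle-tree, and it is well known to be $3$-connected.

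The main step is verifying the absence of separating triangles, which I would do by classifying every $3$-cycle according to how many of its edges are tree edges and how many are cycle edges. Three tree edges are impossible since $T$ is acyclic; two cycle edges and one tree edge are impossible since the endpoints of a cycle edge are leaves, which are pairwise non-adjacent in $T$; three cycle edges force the whole leaf cycle to be a triangle (treated below). The only remaining case is one cycle edge $ab$ together with two tree edges $va$ and $vb$, which occurs exactly when $a$ and $b$ are consecutive leaves sharing the parent $v$. In the planar embedding no vertex lies between two consecutive leaves, so such a triangle $avb$ bounds an internal triangular face and cannot be separating. Hence no $3$-cycle of a Halin graph separates it.

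Finally I would show that the outer face, namely the leaf cycle, is a triangle only for $K_4$. A degree count on $T$, with all internal vertices of degree at least $3$, shows that a tree with exactly three leaves has at most one internal vertex, forcing $T = K_{1,3}$ and therefore $G = K_4$. Thus for $G \not\simeq K_4$ the leaf cycle has at least four vertices and the outer face is not a triangle. With all three hypotheses established, \autoref{le:tree} immediately yields a square-contact representation of $G$. The only delicate point is the triangle classification in the second paragraph; once one is convinced that every $3$-cycle is a face, the rest of the argument is routine.
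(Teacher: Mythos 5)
Your proposal is correct and follows exactly the paper's route: the paper derives the corollary in a single sentence by observing that Halin graphs are $3$-connected cycle-trees without separating triangles whose outer face, except for $K_4$, is non-triangular, and then invoking \autoref{le:tree}. Your write-up simply supplies the details the paper leaves implicit (the classification of $3$-cycles showing every triangle is facial, and the degree count showing a triangular leaf cycle forces $G = K_4$), all of which are sound.
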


Next, we investigate square-contact representations of $2$-outerplanar simply-nested graphs that are not cycle-trees (\autoref{le:twolevels}) and $3$-outerplanar simply nested graphs (\autoref{le:threelevels}).

\begin{theorem}\label{le:twolevels} 
There exists a $3$-connected $2$-outerplanar simply-nested graph that does not admit any proper square-contact representation.
\end{theorem}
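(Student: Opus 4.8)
The plan is to exhibit a single graph $G$ and prove by contradiction that it has no proper square-contact representation, even though $G$ satisfies the simple necessary condition of \autoref{se:intro} (some embedding is free of separating triangles and of a triangular outer face). Since \autoref{le:tree} already settles the cycle-trees, the witness must be a \emph{cycle-cycle}: I would take $G$ to consist of an outer chordless cycle $C_{\mathrm{out}}$ and an inner chordless cycle $C_{\mathrm{in}}$ joined by spokes chosen so that (a) every vertex has degree at least $3$ and $G$ is $3$-connected; (b) no outer vertex sends spokes to two consecutive inner vertices, no inner vertex receives spokes from two consecutive outer vertices, and no two consecutive vertices of one cycle share a common neighbour on the other, so that the inner face, the outer face, and every side face is a quadrilateral or larger, which kills all triangular faces and hence makes $G$ meet the necessary condition; and (c) the spoke pattern is \emph{unbalanced} — $C_{\mathrm{in}}$ is short (a $4$-cycle) while some inner vertex carries several spokes and $C_{\mathrm{out}}$ is correspondingly longer. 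Conditions (a)--(b) are what make the example nontrivial, and condition (c) is what will create the obstruction. By contrast, the balanced $k$-prisms ($k \ge 4$) do admit representations, obtained by nesting one pinwheel of squares inside another around the central hole, so the asymmetry in (c) is essential: a balanced example cannot be a witness.

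Now assume a proper representation $\Gamma$ of $G$ exists. The engine of the argument is the rigidity forced on $\Gamma$ by \emph{properness} together with triangle-freeness. Every bounded face $f$ of $G$ is a rectilinear hole of positive area bounded by the squares of its $\deg(f)\ge 4$ incident vertices, and around such a hole the squares must form a windmill-type arrangement whose boundary, read as a rectilinear polygon, satisfies $(\text{convex corners}) - (\text{reflex corners}) = 4$. Each convex corner of a hole is a $T$-junction carrying the corner of exactly one square; each reflex corner wraps the convex corner of exactly one square; and each of the $4\,|V(G)|$ square corners is charged exactly once to a hole corner, an outer-face corner, or a flat $T$-incidence. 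I would package this into a discrete Gauss--Bonnet identity, summing the per-face relation over all bounded faces together with the outer face and matching the total against $4\,|V(G)|$. The nesting of the two cycles then fixes the orientation data: the squares of $C_{\mathrm{in}}$ wind once around the innermost hole and the squares of $C_{\mathrm{out}}$ wind once around the outer region, and the spokes transmit these two prescribed windings across the annulus of side faces.

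The contradiction is then extracted by evaluating this bookkeeping on $G$. The unbalanced spoke pattern forces the heavily-spoked inner vertex and its neighbours into a local configuration in which a single square is pressed on three of its four sides by the squares it must separate, so that its fourth side is the only channel through which both the remaining inner square and the innermost face can be reached; tracing the required net turning of $C_{\mathrm{in}}$ against the turning imposed by the side-face windmills yields a parity mismatch, i.e.\ the total turning cannot simultaneously equal $+2\pi$ on the inner cycle and remain consistent with the windings transmitted by the spokes, so some required adjacency must be lost or a forbidden triangular (corner-to-corner) contact must be created. I expect the main obstacle to be exactly this rigidity step: one must show that \emph{every} representation, not merely a symmetric one, is driven into the offending configuration, and the quadrant-scaling operations defined in the preliminaries endow $\Gamma$ with a great deal of affine freedom. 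My plan for defeating that freedom is to normalize each face to its windmill form and then reason purely about the cyclic pattern of which sides of each square carry its contacts, reducing the whole argument to a finite combinatorial case check on the spoke pattern of $G$ rather than a continuous geometric optimization.
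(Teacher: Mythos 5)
Your proposal has a genuine gap, and it starts with a misreading of the necessary condition. That condition only forbids separating triangles (in every embedding) and a triangular outer face; internal triangular \emph{faces} are perfectly compatible with a proper representation, since three pairwise properly-touching squares simply bound a region of zero area. Your condition (b), which insists that every face be a quadrilateral or larger, is therefore not needed to ``meet the necessary condition'' --- and imposing it destroys precisely the rigidity that makes a counterexample possible. The paper's witness is the graph of two nested quadrilaterals in antiprism position (each inner vertex adjacent to two \emph{consecutive} outer vertices): the four outer squares bound a rectangular hole; each inner square is then forced into the corner of that hole formed by its two outer neighbours; and four corner-nestling squares can pairwise touch in a $4$-cycle only if the hole is a square split into four equal quarters --- whence two diagonally opposite inner squares meet at a corner, an improper contact. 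The triangular side faces are exactly what pins each inner square to a corner of the hole. By ruling out inner vertices with two consecutive outer neighbours, your candidate witnesses only require side contacts between the inner squares and the outer ring, which leaves continuous freedom; indeed you concede that the balanced prisms admit pinwheel representations, and nothing in your sketch shows that making the spoke pattern unbalanced removes this freedom rather than merely redistributing it.

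Second, even granting your choice of witness family, the proof is not carried out. No concrete graph is specified, and the core step --- the ``discrete Gauss--Bonnet'' and winding-number bookkeeping that is supposed to yield a parity mismatch --- is described as a plan, with you yourself flagging the rigidity step as the main obstacle. The claim that the heavily-spoked inner vertex is pressed on three of its four sides is asserted, not derived from properness, and the proposed reduction to ``a finite combinatorial case check'' is never set up. As it stands, the argument establishes neither that your family contains a non-representable graph nor a contradiction for any specific member of it, whereas the paper's proof is a short, fully explicit rigidity argument on an eight-vertex graph whose forced configuration can be checked by hand.
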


\begin{proof}
Consider the two nested quadrilaterals shown in \autoref{fig:nested-quads}\red{(left)}.
One of its two quadrilateral faces must be the outer one, giving the embedding shown.
In any square-contact representation, the inner polygon surrounded by the squares for the four outer vertices must be a rectangle, as it has only four sides. Each of the four inner squares must touch one of the four corners of this rectangle (the corner made by its two outer neighbors). For the four inner squares to touch the four corners of the rectangle and each other, the only possibility is that the rectangle is a square and each inner square fills one quarter of it, as shown in \autoref{fig:nested-quads}\red{(middle)}. However, this representation is improper, as \mbox{diagonally-opposite inner squares meet at their corners.}
\end{proof}

\begin{theorem}\label{le:threelevels} 
There exists a $3$-connected $3$-outerplanar simply-nested graph that does not admit any square-contact representation.
\end{theorem}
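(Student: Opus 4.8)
The plan is to exhibit an explicit triconnected $3$-outerplanar simply-nested graph $G$ that meets the necessary condition of \autoref{se:intro} yet admits no square-contact representation, and to obtain the impossibility by pushing the rigidity of the two-level gadget of \autoref{le:twolevels} one ring deeper. Concretely, I would take three nested chordless $4$-cycles $V_1\supset V_2\supset V_3$ and, between each consecutive pair of levels, install the same incidences used for the nested quadrilaterals in \autoref{fig:nested-quads}, so that every vertex of an inner ring lies in the face cut out by two consecutive vertices of the next outer ring. Before arguing, I would record the routine facts that make the statement non-vacuous and of the advertised strength: each $V_i$ is a chordless cycle and $G[V_3]$ is a cycle, so $G$ is genuinely simply-nested and $3$-outerplanar; its outer face is a quadrilateral; and, exactly as for the two-level gadget, $G$ has no separating triangle, so neither the triangular-outer-face nor the separating-triangle part of the necessary condition is violated.

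Since $G$ is $3$-connected, its planar embedding is unique up to reflection, so the nesting $V_1\supset V_2\supset V_3$ and the choice of the outer quadrilateral are forced, and I may reason about a single combinatorial picture. Assuming a representation $\Gamma$ exists, I would first replay \autoref{le:twolevels} on the outer pair: the four squares of $V_1$ surround a region with only four sides, hence a rectangle $R_1$, and each square of $V_2$ is driven into the corner of $R_1$ determined by its two $V_1$-neighbours; to close the middle $4$-cycle and to leave space for $V_3$, the $V_2$-squares must in turn bound a rectangle $R_2$ of positive area. The heart of the argument is to apply the same forcing to the pair $(V_2,V_3)$: the four $V_2$-squares now act as the frame around $R_2$, so the $V_3$-squares are simultaneously pinned into the corners of $R_2$. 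I would then show that these two layers of corner constraints are incompatible: requiring each $V_2$-square both to reach a corner of the outer rectangle $R_1$ and to present a free corner of the inner rectangle $R_2$ toward the centre leaves no room for $R_2$ to keep positive area, which forces three of the squares bounding the central face to meet pairwise along their sides and thus to enclose a nonzero-area region by proper contacts -- exactly what \autoref{se:intro} forbids.

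The main obstacle is this incompatibility step, and in particular closing the \emph{escape} that was available one level down: in the two-level gadget the four middle squares may retreat from the centre to leave a central hole, turning the forced corner-corner meetings into legal ones, whereas here that hole must itself host the ring $V_3$ and be surrounded by it. I expect to rule out every placement by an orientation-by-orientation analysis of how an axis-aligned square can be wedged into the corner of $R_1$ while still exposing a free corner toward the centre; the analysis should show that once all four corners of $R_1$ are occupied from the outside and all four corners of $R_2$ are demanded from the inside, two of the four directions around the centre are exhausted, so $R_2$ degenerates to a point. This degeneration is the crux, since it converts the nested packing into a forbidden triple of side-touching squares around the central face. Finally, because that obstruction concerns proper contacts, which every edge must realise irrespective of whether corner-corner touches are tolerated elsewhere, the impossibility persists even when weak contacts are permitted, yielding the strengthening announced for the $3$-outerplanar case.
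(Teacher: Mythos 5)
Your overall strategy coincides with the paper's: both proofs run the rigidity argument of \autoref{le:twolevels} on the two outer quadrilaterals and then conclude that the forced configuration leaves no room for a third level. The paper's witness is smaller (the two nested quadrilaterals plus a \emph{single} inner vertex, so $V_3$ is a one-vertex tree), whereas you insert a full third $4$-cycle; both graphs are legitimate, and your intended contradiction (the region $R_2$ bounded by the $V_2$-squares degenerates to a point, so the $V_3$-squares cannot be placed) is exactly the paper's ``leaves no room for the inner vertex.''

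Two claims in your write-up are wrong, however. First, there is no ``escape'' in the two-level gadget: the proof of \autoref{le:twolevels} shows the four middle squares \emph{cannot} retreat from the centre to leave a hole, because each consecutive pair must share a contact of positive length, forcing their side lengths to sum to the side of $R_1$, and interior-disjointness of the two diagonal (non-adjacent) pairs then forces the exact quarter-tiling. The improper representation that exists for the two-level gadget \emph{is} that quarter-tiling, with diagonal squares meeting at the centre point; it is not a hole configuration. So the ``main obstacle'' you anticipate---closing an escape available one level down---does not exist: your incompatibility step is precisely the computation already carried out in \autoref{le:twolevels}, and once it is cited, the contradiction is immediate (every boundary point of every $V_2$-square is already in contact with a $V_1$- or $V_2$-square, so no $V_3$-square can properly touch its neighbours). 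Relatedly, the phrase about ``three squares enclosing a nonzero-area region'' is a red herring; the contradiction is simply that $S$-squares for $V_3$ have nowhere to go. Second, your justification of the improper/weak case is incorrect: in a weak representation edges \emph{may} be realized by corner-corner contacts (the paper states ``These weak contacts may represent edges of the graph''), so it is false that every edge must be realized by a proper contact. Extending the impossibility to weak representations requires redoing the forcing while allowing corner contacts---for instance, the four outer squares may then form a pinwheel whose corner contacts enclose a rectangular hole---and verifying the middle squares still tile that hole; this is easy but is genuine extra case analysis, not your one-line claim. Under the paper's convention that ``square-contact representation'' means proper, the stated theorem needs only the proper case, where your plan---modulo the misreading of \autoref{le:twolevels}---is sound; also note that the choice of outer face is forced by the no-triangular-outer-face condition plus the $V_1\leftrightarrow V_3$ symmetry of your graph, not by $3$-connectivity alone.
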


\begin{proof}
Consider the graph shown in \autoref{fig:nested-quads}\red{(right)}.
Its quadrilateral face must be the outer one, giving the embedding shown.
As in the proof of \autoref{le:twolevels}, the only possible representation for its two outer quadrilaterals has the four outer squares surrounding a central square region, divided into four quarters representing the four middle vertices, as shown in \autoref{fig:nested-quads}\red{(middle)}. However, this representation leaves no room for the inner vertex.
\end{proof}

\begin{figure}[t]
\centering\includegraphics[scale=0.41]{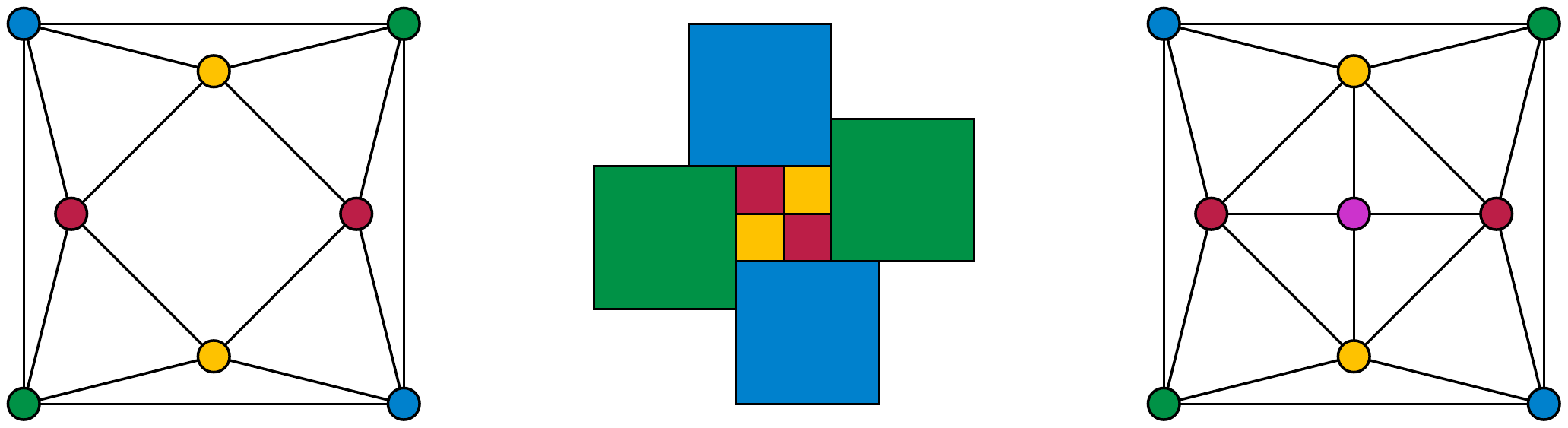}
\caption{Left: Two nested quadrilaterals form a graph with no proper square-contact representation. Middle: An improper square-contact representation for the same graph. Right: A graph with no square-contact representation, even an improper one.}
\label{fig:nested-quads}
\end{figure}

We remark that the graph of \autoref{le:threelevels} is actually $2$-outerplanar simply-nested, but not with its quadrilateral face as the outer face.

\section{Conclusions}\label{se:conclusions}

In this paper, we provided simple characterizations for two notable families of planar graphs that admit proper square-contact representations.~Moreover, we introduced a new decomposition for an interesting family of polyhedral graphs that generalize the Halin graphs, i.e., the $3$-connected cycle-trees. Finally, we showed that the absence of separating triangles and a non-triangular outer face do not guarantee the existence of \mbox{weak and proper square-contact} representations of $3$-outerplanar and $2$-outerplanar simply-nested graphs, respectively.

\smallskip
\paragraph{Acknowledgements.} We thank Jawaherul M. Alam for useful discussions on this subject.

\bibliography{bibliography}
\ifarxive
\clearpage

\appendix
\section{Omitted Definitions}\label{apx:definitions}

A graph is \emph{connected} if it contains a path between any
two vertices.  A \emph{cutvertex} is a vertex whose removal
disconnects the graph.  A {\em separation pair} is a pair of
vertices whose removal disconnects the graph.  A connected graph is
\emph{$2$-connected} if it does not have a cutvertex and a $2$-connected
graph is \emph{$3$-connected} if it does not have a separation pair. The maximal $2$-connected components of a graphs are its {\em blocks}.

A graph is {\em planar} if it admits a drawing in the plane without edge crossings.
A {\em combinatorial embedding} is an equivalence class of planar drawings, where two drawings of a graph are {\em equivalent} if they determine the same circular orderings for the edges around each vertex.
A planar drawing partitions the plane into topologically connected regions, called {\em faces}. 
The bounded faces are the {\em inner faces}, while
the unbounded face is the {\em outer face}.
A combinatorial embedding together with a choice for the outer face defines a {\em planar embedding}.
An \emph{embedded graph} (\emph{plane graph}) is a planar graph with a fixed combinatorial embedding (fixed planar embedding). 

\section{Omitted Proofs}\label{apx:proofs}
In this appendix, we give full details of omitted or sketched proofs.

\subsection{Omitted Proofs of \autoref{se:sp}}\label{apx:proofs-sp}



\Lemmatwotreetosp*

\begin{proof}
\proofoflemmatwotreetosp
 \end{proof}

\LemmaEDGE*

\begin{proof}
\proofoflemmaEDGE
 \end{proof}

 \LemmaNOEDGE*

\begin{proof}
\proofoflemmaNOEDGE
 \end{proof}

 \LemmaB*

\begin{proof}
\proofofLemmaB
 \end{proof}

 \LemmaC*

\begin{proof}
\proofofLemmaC
 \end{proof}

\subsection{Omitted Proofs of \autoref{se:simply}}\label{apx:proofs-simply}

In a path-tree, the \emph{interior} vertices are the vertices other than $root$, $\ell$, and $r$. In an almost-triconnected path-trees, every interior vertex has \mbox{degree at least three.}


\theorempathTreeDecomp*


\begin{proof}
\proofTheorempathTreeDecomp
 \end{proof}


\LemmaSCRpathtree*

\begin{proof}
\proofLemmaSCRpathtree
 \end{proof}


\TheoremSCRcycletree*

\begin{proof}
\proofofTheoremSCRcycletree
 \end{proof}

\else
\fi

\end{document}